\title{\vspace{-2.5cm}\bf Noncommutative geometry, Lorentzian structures and causality}
\author{Nicolas Franco$^*$ \& Micha{\l} Eckstein$^{\dagger *}$}
\date{\footnotesize$^*$ Copernicus Center for Interdisciplinary Studies, \\
ul. S{\l}awkowska 17, 31-016 Krak\'ow, Poland \\[0.2cm]
 \hspace{-0,5cm}$^{\dagger}$ Jagiellonian University, Faculty of Mathematics and Computer Science,\hspace{-0.5cm}\\
 ul. {\L}ojasiewicza 6, 30-348 Krak\'ow, Poland \\[0.2cm]
  nicolas.franco@math.unamur.be \quad michal.eckstein@uj.edu.pl}
\newtheorem{theorem}{Theorem}
\newtheorem{proposition}[theorem]{Proposition}
\theoremstyle{definition}
\newtheorem{definition}[theorem]{Definition}
\newcommand{\norm}[1]{\left\Vert {#1} \right\Vert}
\newcommand{\abs}[1]{\left\vert {#1} \right\vert}
\newcommand{\set}[1]{\left\{ {#1} \right\}}
\newcommand{\scal}[1]{\left< {#1} \right>}
\newcommand{\setZ}{{\mathbb Z}}
\newcommand{\setR}{{\mathbb R}}
\newcommand{\setC}{{\mathbb C}}
\newcommand{\A}{\mathcal{A}}
\newcommand{\D}{\mathcal{D}}
\renewcommand{\H}{\mathcal{H}}
\newcommand{\J}{\mathcal{J}}
\newcommand{\M}{\mathcal{M}}
\renewcommand{\P}{\mathcal{P}}
\newcommand{\T}{\mathcal{T}}
\newcommand{\C}{\mathcal{C}}
\DeclareMathOperator{\Span}{span}
\DeclareMathOperator{\diag}{diag}
\DeclareMathOperator{\Tr}{Tr}
\begin{document}

\maketitle
\thispagestyle{empty}


\begin{abstract}
The theory of noncommutative geometry provides an interesting mathematical background for developing new physical models. In particular, it allows one to describe the classical Standard Model coupled to Euclidean gravity. However, noncommutative geometry has mainly been developed using the Euclidean signature, and the typical Lorentzian aspects of space-time, the causal structure in particular, are not taken into account. We present an extension of noncommutative geometry \`a la Connes suitable the for accommodation of Lorentzian structures. In this context, we show that it is possible to recover the notion of causality from purely algebraic data. We explore the causal structure of a simple toy model based on an almost commutative geometry and we show that the coupling between the space-time and an internal noncommutative space establishes a new `speed of light constraint'.
\end{abstract}
%
%

\section{Introduction}

The idea of combining mathematics with physics in order to describe the enclosing World comes from the ancient Greeks with the pioneering works of Archi\-medes, which can be found in his recently discovered palimpsest. Since then, from the first simple models by Ptolemy and Copernicus, the laws of the Universe have been successfully described by more and more complicated mathematical structures. The crowning achievement of these efforts are the theories of General Relativity and the (quantum) standard model of particle physics. The first one describes the behaviour of the universe at large scales and is based on the differential geometry developed by Riemann. The second one describes the fundamental particles of the universe and their interactions at short scales, using the algebraic techniques of gauge theories. An ultimate model, which is still out of sight, would combine aspects of both large and short scales, describing at the same time gravitation and the other interactions. In order to reach this goal, mathematicians and physicists need to associate with each other and propose novel physical models based on sophisticated mathematical structures.

The theory of noncommutative geometry, initially introduced by Connes \cite{lukbibl28}, is one of the new theories that provides an interesting mathematical background suitable for physical models, especially concerning the problem of unifying fundamental interactions. In this framework one uses a strong relation (equivalence of suitable categories) between geometry and algebra in order to translate the usual Riemannian geometry to the formalism of commutative $C^*$-algebras. Next, the geometric concepts are extended to general noncommutative $C^*$-algebras, hence the name of \emph{noncommutative geometry}. In this formalism one can formulate the \emph{noncommutative standard model} -- the classical standard model of particles coupled to Euclidean gravity \cite{Chamseddine:2006ep,ncg-bookC}.

While the noncommutative standard model already yields several interesting predictions in particle physics, the gravitational part of this model -- the part of noncommutative geometry which corresponds to the usual Riemannian geometry -- has mainly been developed using Euclidean signature, 
with no distinction between the time and spatial coordinates. As a consequence, the typical Lorentzian aspects of space-time coming from the differences between timelike and spacelike directions are lost in the current model. The development of \emph{Lorentzian noncommutative geometry} -- a Lorentzian counterpart of Connes' theory incorporating the specificities of Lorentzian signature -- is very recent and far from being complete. We shall present in this expository article the basic ingredients of Lorentzian noncommutative geometry along with our latest results. In particular, we will focus on how to define the causal structure of a, possibly noncommutative, space-time. We will show that in the almost commutative setting, causality plays an important role in the coupling between the classical space-time and the internal noncommutative space.

\section{Noncommutative geometry as a new tool for physical models}\label{secintrononcommu}

Geometry is probably the most useful mathematical concept in the construction of physical models. Typically, one uses at first a topological space (Hausdorff space, manifold), chooses an additional structure on it (differential structure, metric tensor, symplectic form, \ldots) and then tries to set some principles describing the dynamics (equations of motion, Lagrangian, Hamiltonian, \ldots). The construction of noncommutative geometry follows the same three steps, whose respective names could be:
\begin{itemize}\itemsep=3pt
\item Gelfand-Naimark duality;
\item Spectral triples;
\item Spectral action.
\end{itemize} \vspace{-0.5cm}

The powerfulness of noncommutative geometry consists in the fact that in the first step one enlarges the set of topological spaces to some `noncommutative spaces', which allow only a global, algebraic description. The next two steps are just an attempt to extend the usual structures and dynamical principles from classical geometric spaces to the new constructs. As a result, noncommutative geometry recovers the standard geometric models, at the same time extending the old concepts into an unexplored world, which contains new structures and new dynamics. In this section, we shall briefly present the key elements of these three steps.

\paragraph{The topology: Gelfand-Naimark theorem.}

If we consider a locally compact Hausdorff space $X$, then the space $C_0(X)$ of complex-valued continuous functions vanishing at infinity is a Banach algebra which is commutative and respects the norm condition $\norm{f \cdot f^*}_{\infty} = \norm{f}_{\infty}^2$. In general, we define a $C^*$-algebra as an involutive Banach algebra $A$ satisfying $\norm{a a^*} = \norm{a}^2$. Hence, $A=C_0(X)$ is a particular -- commutative -- case of a $C^*$-algebra, which is unital whenever the Hausdorff space $X$ is compact.

The famous Gelfand-Naimark Theorem guarantees that every commutative $C^*$-algebra emerges in this way.

\begin{theorem}
If $A$ is a commutative $C^*$-algebra, then the Gelfand transform \vspace{-0.4cm}
$$ \bigvee : A \rightarrow C_0(\Delta(A)) : a \leadsto \hat a \quad\text{defined by}\quad \hat a(\chi) = \chi(a) \vspace{-0.3cm}
$$ is an isometric *-isomorphism.
\end{theorem}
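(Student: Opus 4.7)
The plan is to establish the result in three stages: topological setup of $\Delta(A)$ together with reduction to the unital case, verification of the $*$-homomorphism and isometry properties, and finally surjectivity via Stone--Weierstrass.

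First I would equip $\Delta(A)$, the set of nonzero multiplicative linear functionals $\chi : A \to \setC$, with the weak-$*$ topology inherited from $A^*$. Every such character has norm at most $1$, so $\Delta(A) \cup \set{0}$ is a weak-$*$ closed subset of the unit ball of $A^*$ and hence compact by Banach--Alaoglu; removing $0$ leaves $\Delta(A)$ locally compact Hausdorff, and compact precisely when $A$ is unital. If $A$ is non-unital I would pass to the unitization $\tilde A = A \oplus \setC$: its characters are the extensions of characters of $A$ together with one distinguished ``character at infinity,'' identifying $\Delta(\tilde A)$ with the one-point compactification of $\Delta(A)$. This reduces everything to the unital setting, which I assume from here on.

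Next, the map $a \mapsto \hat a$ is manifestly linear and multiplicative because each $\chi$ is. The crucial algebraic step is $*$-preservation, $\widehat{a^*} = \overline{\hat a}$, which is where the $C^*$-identity enters. Decomposing $a = a_1 + i a_2$ with $a_1, a_2$ self-adjoint, it suffices to show that $\chi(h) \in \setR$ for every self-adjoint $h$. The standard argument uses holomorphic functional calculus: for $h = h^*$ and $t \in \setR$, the element $u = \exp(ith)$ satisfies $u^* u = u u^* = 1$ and is therefore unitary with $\norm{u} = 1$; applying the character gives $\abs{e^{i t \chi(h)}} \leq 1$ for all real $t$, which forces $\chi(h) \in \setR$. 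With $*$-preservation in hand, isometry follows from two ingredients: in any commutative unital Banach algebra the spectral radius satisfies $r(a) = \sup_{\chi \in \Delta(A)} \abs{\chi(a)} = \norm{\hat a}_\infty$, and the $C^*$-identity gives $\norm{a^* a} = \norm{a}^2$ together with $r(a^* a) = \norm{a^* a}$ for the self-adjoint element $a^* a$, so $\norm{\hat a}_\infty^2 = \norm{\widehat{a^* a}}_\infty = r(a^* a) = \norm{a}^2$.

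Finally, for surjectivity I would observe that $\hat A$ is a unital $*$-subalgebra of $C(\Delta(A))$ that separates points of $\Delta(A)$ (if $\chi_1 \neq \chi_2$ then some $a \in A$ satisfies $\chi_1(a) \neq \chi_2(a)$). By the Stone--Weierstrass theorem $\hat A$ is dense, and being the isometric image of a Banach space it is also closed, hence equals $C(\Delta(A))$. The main obstacle is the $*$-preservation step: this is the only place where the $C^*$-identity $\norm{a a^*} = \norm{a}^2$ (as opposed to the mere Banach algebra axioms) is genuinely needed, and without it characters on a commutative Banach algebra can send self-adjoint elements to non-real scalars, as already happens in the disc algebra.
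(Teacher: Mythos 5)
Your proof is correct, but note that the paper itself does not prove this theorem at all: it is presented as the ``famous Gelfand--Naimark theorem'' and cited as classical background, so there is no in-paper argument to compare against. The route you take is the standard textbook one: equip $\Delta(A)$ with the weak-$*$ topology and obtain (local) compactness via Banach--Alaoglu, then reduce to the unital case through the unitization; prove $*$-preservation by the exponential unitary trick forcing $\chi(h)\in\mathbb{R}$ for self-adjoint $h$; derive isometry from the spectral-radius formula $\norm{\hat a}_\infty = r(a)$ together with the $C^*$-identity and $r(a^*a)=\norm{a^*a}$ for the self-adjoint element $a^*a$; and conclude surjectivity from Stone--Weierstrass plus the fact that an isometric image of a Banach space is closed. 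All of these steps are sound. Two minor points you gloss over but that do not affect validity: the unitization $\tilde A$ must be equipped with its unique $C^*$-norm (not the naive sum norm) for $\Delta(\tilde A)$ to be the one-point compactification and for the $C^*$-identity to persist; and the identity $\chi(\exp(ith))=\exp(it\chi(h))$ uses the automatic continuity of characters on a Banach algebra ($\norm{\chi}\le1$). Your closing remark correctly identifies the $*$-preservation step as the one place the $C^*$-identity is indispensable, with the disc algebra as the right counterexample.
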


The set $\Delta(A)$, called the spectrum of $A$, is the space of $^*$-homomorphisms from $A$ to $\setC$ (usually called characters). $\Delta(A)$ can always be endowed with a locally compact (or compact if $A$ is unital) Hausdorff topology. This means that while any locally compact Hausdorff space gives rise to a commutative $C^*$-algebra, the converse is also true due to the Gelfand-Naimark theorem.

While dealing with a $C^*$-algebra $A$, we can define the set of states $S(A)$ -- positive linear functionals of norm one. The subset $P(A) \subset S(A)$ consists of pure states which are extremal points of $S(A)$ (i.e.~states that cannot be written as a convex combination of two other states). The set $S(A)$ is equipped with a natural weak-$^*$ topology. When the algebra is commutative, we have $P(A) \simeq \Delta(A)$ as topological spaces.

The topological space of our geometric model can now easily be defined. Let us take for example a manifold $\M$ (which is a locally compact Hausdorff space) and fix a commutative $C^*$-algebra $A = C_0(\M)$. To each point $x\in\M$ of the manifold there corresponds a pure state $\chi \in P(A)$ via the relation $\chi(a)=a(x)$, and every pure state can be written in this way. Hence, pure states can be considered as the \emph{points} of a topological space completely determined by the choice of the commutative $C^*$-algebra $A$. By taking this correspondence along to the realm of noncommutative $C^*$-algebras, we have the first setting of a \emph{noncommutative space}. As $P(A)$ comes with the natural weak-$^*$ topology, we actually can speak of a \emph{noncommutative topological space}.

\paragraph{The differential structure: Spectral triples.}

Given a compact Riemannian manifold with a spin structure $S$ one can endow the topological space determined by the $C^*$-algebra $A=C_0(\M)$ with a differential structure. First note that $\H=L^2(\M,S)$ -- the space of square integrable sections of the spinor bundle over $\M$ -- is a Hilbert space on which the elements of the algebra $A$ act as bounded operators. Next, with the help of the spin connection $\nabla^S$, one can define a Dirac 
 operator $D = -i(\hat c \circ \nabla^S) = -i e^\mu_a\gamma^{a} \nabla^S_\mu$, where $e^\mu_a$ stand for vielbeins and $\gamma^{a}$ are the flat gamma matrices. $D$ acts as an unbounded self-adjoint operator on $\H$ and has a compact resolvent. In order to extend such a structure to noncommutative $C^*$-algebra, the following axioms were proposed:

\begin{definition}
A spectral triple is given by the data $(\mathcal{A},\H,D)$
with:
\begin{itemize}\itemsep=3pt
\item A~Hilbert space $\H$.
\item A~unital pre-$C^*$-algebra $\mathcal{A}$, with a~faithful representation as bounded operators on
$\H$.
\item An unbounded essentially self-adjoint operator $D$ on $\H$, with a compact resolvent and such that $\forall a\in\mathcal{A}$, $[D,a]$ extends to a~bounded operator on $\H$.
\end{itemize}
\end{definition} 

In the commutative case, the condition that $[D,a]$ is a bounded operator restricts the algebra to the pre-$C^*$-algebra $\mathcal{A}=C^\infty(\M)$ of smooth functions (or at least to the algebra of Lipschitz continuous functions). The axioms can be adapted in order to deal with noncompact (but still complete) Riemannian manifolds (see \cite{IochumMoyal} for instance), by using the non-unital pre-$C^*$-algebra $\mathcal{A}=C_0^\infty(\M)$ and by replacing the compact resolvent condition by the requirement of  compactness of $a(1 + D^2)^{-\frac 12}$ $\forall a\in\mathcal{A}$.

An additional structure, like the parity or reality operator, can be set on a spectral triple.

\begin{definition}
A spectral triple is called even if there exists a $\mathbb Z_2$-grading $\gamma$ such that $[\gamma,a] = 0\ \forall a\in\mathcal{A}$ and $\gamma D = -D \gamma$.
\end{definition}

\begin{definition}
A spectral triple is called real of KO-dimension $n \in\mathbb Z_8$ if there exists an antilinear isometry $J :\mathcal{H} \rightarrow \mathcal{H}$  such that:
\begin{align*}
J^2 = \epsilon, && J D = \epsilon' D J, && J \gamma = \epsilon'' \gamma J, && [a,b^ \circ] = 0, && [[D,a],b^ \circ] = 0
\end{align*}
$\forall a,b\in\mathcal{A}$ with $b^ \circ =Jb^*J^{-1}$ and where the numbers $\epsilon, \epsilon', \epsilon'' \in \left\{{-1,1}\right\}$ depend on the value of $n\!\mod 8$:
\begin{center}
\begin{tabular}{|c|rrrrrrrr|}
 \hline
  n &0 &1 &2 &3 &4 &5 &6 &7 \\
\hline \hline
$\epsilon $  &1 & 1&-1&-1&-1&-1& 1&1 \\
$\epsilon'$ &1 &-1&1 &1 &1 &-1& 1&1 \\
$\epsilon''$&1 &{}&-1&{}&1 &{}&-1&{} \\  
\hline
\end{tabular}
\end{center}
\end{definition}

When the algebra is commutative and corresponds to a given Riemannian manifold, the complete information about the differential and metric structures can be recovered with the help of the Dirac operator $D$. Indeed, an element $[D,a]$ corresponds to a one form $da$ through the Clifford action $[D,a]=-ic(da)$ and the standard geodesic distance between two points $p,q\in\M$ is recovered by a purely algebraic formula:
\begin{equation} \label{eqnconnesdist}
d(p,q) = \sup\set{  \abs{a(q)-a(p)} \ :\  a \in \A,\  \norm{[D,a]} \leq 1 }.
\end{equation}
Such a formula admits a natural extension to noncommutative $C^*$-algebras by replacing the points $(p,q)$ with the corresponding pure states.

Hence, the definition of a spectral triple provides a differential and a metric structure for noncommutative manifolds. We can remark that, under some additional assumptions, every spectral triple the algebra of which is commutative corresponds to some compact spin Riemannian manifold on the strength of the famous Reconstruction Theorem \cite{C13}.

\paragraph{The dynamics: Spectral action.}

The dynamics in noncommutative geometry is governed by the \emph{spectral action principle} \cite{ConnesSA}. It assumes that the physical action only depends on the spectrum of the Dirac operator. Thus, the corresponding Lagrangian formalism is given by the following formula:
$$ S = \Tr\left( f \Big( \frac{D_A}{\Lambda} \Big) \right) $$
where $f$ is a positive even function, $\Lambda$ is a scale parameter and $D_A = D + A + \epsilon' J A J^*$ is the fluctuated Dirac operator with a Hermitian one-form $A= \sum_i a_i [D,b_i]$, $a_i,b_i\in\A$. Typically, the function $f$ is chosen to be a smooth approximation of a cut-off function. In such case, the spectral action simply counts the number of eigenvalues of the Dirac operator $D_A$ smaller than the scale $\Lambda$.

When the spectral triple is built from a 4-dimensional compact Riemannian manifold with the standard Dirac operator $D = -i e^\mu_a\gamma^{a} \nabla^S_\mu$, the spectral action yields the following action:
$$ S = \Tr\left( f \Big( \frac{D}{\Lambda} \Big) \right) = \int_\M \mathcal{L_M}   \sqrt{\abs{g}} d^4x +  \mathcal{O}(\Lambda^{-1})$$
and the computation of the Lagrangian density gives \cite[Proposition 3.3]{Dungen}
$$ \mathcal{L_M}   = \frac{f_4}{2\pi^2}\Lambda^4 + \frac{f_2}{24 \pi^2}R \Lambda^2 + \frac{f(0)}{16 \pi^2} \left(  -\frac{1}{30} \Delta R - \frac{1}{20} C_{\mu\nu\rho\sigma} C^{{\mu\nu\rho\sigma}} + \frac{11}{360} R^* R^* \right).$$
The function $f$ enters the formula only through its moments \linebreak $f_i \! = \! \int_0^{\infty} \! f(x) x^{i-1} dx$. The first term gives information about the volume, the second one provides the scalar curvature and the third one contains the Weyl tensor and a Gauss-Bonnet topological invariant.

We have seen that in noncommutative geometry one is able to recover the usual geometry (at least with Euclidean signature). Moreover, the construction of new noncommutative $C^*$- algebras offers the possibility to extend geometric concepts to noncommutative spaces and to unify them with commutative ones. A standard way is to construct an \emph{almost commutative manifold} which is a kind of Kaluza-Kein product between a continuous space and a discrete space. The continuous space $(\mathcal{A}_\mathcal{M},\H_\mathcal{M},D_\mathcal{M},(\gamma_\mathcal{M}))$ is constructed from a Riemannian spin manifold (for simplicity with even dimension) while the finite space $(\mathcal{A}_F,\H_F,D_F)$ is built from a noncommutative algebra $\mathcal{A}_F$. Then, the following product is a well defined spectral triple: \vspace{-0.5cm}
\begin{eqnarray}
&\mathcal{A}=\mathcal{A}_\mathcal{M}\otimes\mathcal{A}_F,\nonumber\\
&H=H_\mathcal{M}\otimes H_F, \label{eqnacproduct}\\
&D=D_\mathcal{M}\otimes1+\gamma_\mathcal{M}\otimes D_F.\nonumber
\end{eqnarray}
With a smart choice of the finite algebra, the unitary group of which corresponds to the gauge group of the Standard Model of elementary particles, one obtains a robust model of fundamental interactions \cite{Chamseddine:2006ep}. A complete review on the subject can be found in \cite{Dungen}.

Usually, the finite algebra is taken to be $\mathcal{A}_F = \mathbb{C} \oplus \mathbb{H} \oplus M_3(\mathbb{C})$. However, one could equivalently work with $\widetilde{\mathcal{A}}_F = \mathbb{C} \oplus M_2(\mathbb{C}) \oplus M_3(\mathbb{C}) \oplus \mathbb{C}$ \cite{Schucker,Christoph}. The latter option is seldom mentioned, however it seems to be better suited from the point of view of the space of states. We will elaborate on this in the concluding section.

\section{Lorentzian structures in noncommutative geometry}\label{secLostruct}

Since noncommutative geometry provides a wonderful tool for the construction of new physical models, it would be logical to apply the formalism to Lorentzian rather than Riemannian manifolds to encompass properly the theories of gravity. However, whereas in the Riemannian (and most of the time compact) setting noncommutative geometry is founded on solid mathematical grounds, in the Lorentzian case the situation is much more cumbersome. The key element of the theory -- the spectral properties of the Dirac operator -- are much harder to describe when the metric is not positive definite. Typically, on a Lorentzian manifold $D$ is not self-adjoint, it has an infinite dimensional kernel and therefore it cannot have a compact resolvent \cite{Baum}. For these reasons, the Lorentzian version of noncommutative geometry is still very much in the shadows, despite several courageous attempts \cite{Mor,Pas,Stro,F4,F6,Rennie12}. In this section, we shall discuss the basic setting for Lorentzian noncommutative geometry using the knowledge available at the present time. In the next sections, we will present some intriguing recent results obtained by the authors concerning the notion of causality in noncommutative geometry.

Since the signature of the metric does not affect the topology of the space, the construction of the $C^*$-algebra and the \emph{space of states} remains identical to the standard one presented in the previous section. One only has to be aware that in the Lorentzian setting compact manifolds are always acausal as they contain closed causal curves. Since one of our goals is to describe the causal structure, we will always work in the locally compact framework. Hence, the algebra carrying the topological information has to be non-unital -- for instance in the commutative case one chooses a suitable subalgebra of $C^\infty_0(\M)$. However, since we will be concerned with causal functions, which are non decreasing along causal curves, we will need a bigger algebra $\widetilde\A$ -- a preferred unitisation of $C^\infty_0(\M)$. Typically it is a specific subalgebra of $C^\infty_b(\M)$ -- the space of smooth bounded functions corresponding to a particular compactification of the manifold.

In Lorentzian signature, we can still consider the Dirac operator $D = -i(\hat{c}\circ\nabla^S) = -i e^\mu_a\gamma^{a} \nabla^S_\mu$ associated with a spin structure. Now, the (flat) gamma matrices respect $\gamma^{a} \gamma^{b} + \gamma^{b} \gamma^{a} = 2 \eta^{ab}$.\footnote{Our convention is such that $\gamma^0$ is anti-Hermitian, $\gamma^a$ are Hermitian for $a>0$ with the signature of the metric being $(-,+,+,+,\dots)$.} However, the space $\Gamma(M,S)$ of sections of the spinor bundle over $M$ does not have a natural Hilbert space structure. On the other hand, there exists a \emph{natural} bilinear form $(\cdot,\cdot)_x$ which is of split signature and endows $\Gamma(M,S)$ with the non-degenerate inner product:
$$ (f,g) = \int_\M (f_x,g_x)_x \sqrt{\abs{g}} d^nx. $$
But since this inner product is not positive definite, the structure is not the one of a Hilbert space but of a Krein space \cite{Bog}. However, there exists a second possibility, which gives a well defined Hilbert space structure. Instead of $(\cdot,\cdot)_x$ one could use a positive definite bilinear form $\scal{ \cdot,\cdot}_x$ over $S$ (as e.g.~the natural scalar product in $\mathbb{C}^{2^{\lfloor n/2\rfloor}}$) and define the positive definite inner product:
$$ \scal{f,g} = \int_\M \scal{f_x,g_x}_x \sqrt{\abs{g}} d^nx. $$
The Hilbert space defined as the completion under this positive definite inner product is denoted by $\H=L^2(\M,S)$. The relation between $(\cdot,\cdot)_x$ and $\scal{ \cdot,\cdot}_x$ is provided with the help of Clifford multiplication by a normalised timelike vector field $e$ as $\scal{ \cdot,\cdot}_x=(\cdot, i c(e_x)\cdot)_x$. This gives rise to a bounded self-adjoint operator $\J = i c(e)$ on $\H$ respecting $\J^2=1$ and intertwining between the two scalar products:
\begin{align*}
(\cdot,\cdot) = \scal{\cdot,\J \cdot} && \text{and} && \scal{\cdot,\J \cdot} =  (\cdot,\cdot).
\end{align*}
Such an operator is called a \emph{fundamental symmetry} and turns the Hilbert space into a Krein space and vice versa. A fundamental symmetry is somewhat similar to a Wick rotation and can often be associated with a spacelike reflexion, i.e.~a linear map on the tangent bundle respecting $r^2=1$, $g(r\cdot,r\cdot) = g(\cdot,\cdot)$ and such that $g^r(\cdot,\cdot)=g(\cdot,r\cdot)$ is a Riemannian metric.

Now that we have a Hilbert space, we can consider the action of $\A$ and $D$ on it. However, whereas in the Riemannian case the Dirac operator was (essentially) self-adjoint on $\H$, this is no longer the case in the Lorentzian setting. Instead, the self-adjointness property must be reformulated using the Krein space structure.

\begin{theorem}[\cite{Baum,Stro}]\label{thstro}
If there exists a spacelike reflection $r$ such that the Riemannian metric $g^r$ associated with this reflection is complete, then $i D$ is essentially Krein-selfadjoint, i.e.~essentially self-adjoint for the indefinite inner product $ (\cdot,\cdot)$.\\
If $\J$ is the fundamental symmetry associated with $r$, then $i \J D$ extends to a self-adjoint operator on $\H$ or equivalently $\D^* = -\J D \J$.
\end{theorem}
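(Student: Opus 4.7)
The plan is to reduce the statement to Chernoff's theorem on essential self-adjointness of symmetric first-order differential operators on complete Riemannian manifolds. Since $\J$ is bounded, self-adjoint on $\H$ and satisfies $\J^2 = 1$, the Krein adjoint relates to the Hilbert adjoint by $A^\times = \J A^* \J$ for any densely defined $A$; in particular $(iD)^\times = -i\J D^*\J$. Left-multiplying the essential Krein-self-adjointness condition $\overline{iD} = (iD)^\times$ by the bounded involution $\J$ and using $(\J B)^* = B^*\J$ shows this is equivalent to $\overline{i\J D} = (i\J D)^*$, i.e.~to essential self-adjointness of $i\J D$ on $\H$. The same reduction immediately yields $\overline{D} = -\J D^*\J$, equivalently $D^* = -\J \overline{D}\J$, which is the second assertion.

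To establish essential self-adjointness of $i\J D$, first I would verify symmetry on a suitable core. On compactly supported smooth spinors an integration by parts combined with the spin-compatibility of $\nabla^S$ shows that $D$ is Krein-symmetric, i.e.~$(D\phi,\psi) = (\phi,D\psi)$; rewriting via $\scal{\cdot,\cdot} = (\cdot,\J\cdot)$ and using that $\J$ is both Hilbert- and Krein-self-adjoint then yields $\scal{i\J D\phi,\psi} = \scal{\phi,i\J D\psi}$ for $\phi,\psi\in\Gamma^\infty_c(\M,S)$. The second step is a symbol estimate: working in a local frame adapted to the splitting $T\M = V_+\oplus V_-$ induced by $r$, and using the Clifford relations together with $g^r(X,Y) = g(X,rY)$ and $g(e,e)=-1$, one computes the principal symbol $\sigma_{i\J D}(\xi) = ic(e_x)c(\xi^\sharp)$ and shows that its $\scal{\cdot,\cdot}$-operator norm is bounded by a constant multiple of $|\xi|_{g^r}$. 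In particular $i\J D$ has finite propagation speed with respect to the Riemannian distance $d^r$ associated with $g^r$.

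Because $g^r$ is complete, Chernoff's theorem on essential self-adjointness of symmetric first-order differential operators of uniformly bounded propagation speed on complete Riemannian manifolds immediately applies to $i\J D$ on $\Gamma^\infty_c(\M,S)$, finishing the proof. The hard part is the symbol estimate of the second paragraph: one must trade Lorentzian Clifford data for Riemannian Clifford data using the spacelike reflection $r$, bounding the operator norm of $c(e_x)c(\xi^\sharp)$ in the positive-definite inner product $\scal{\cdot,\cdot}_x$ by $|\xi|_{g^r}$, and at the same time verify that the discrepancy between the Lorentzian spin connection $\nabla^S$ and the Riemannian spin connection of $g^r$ is a zeroth-order, pointwise-bounded endomorphism, so that it affects neither the symmetry statement nor the propagation-speed bound. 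Everything else is a formal consequence of the boundedness and Krein/Hilbert self-adjointness of $\J$ together with the completeness of $g^r$.
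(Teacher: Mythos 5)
The paper does not reprove Theorem~\ref{thstro}: it is quoted as a known result from Baum (1981) and Strohmaier (2006), with no in-text argument, so there is no ``paper's own proof'' to compare against. Your outline is, however, precisely the strategy of Strohmaier's proof (which itself refines Baum's): reduce Krein essential self-adjointness of $iD$ to Hilbert essential self-adjointness of $i\J D$ via the algebraic dictionary $A^\times = \J A^* \J$, check formal symmetry of $i\J D$ on $\Gamma^\infty_c(\M,S)$, obtain a uniform propagation-speed bound for $i\J D$ relative to $g^r$ from a pointwise Clifford-algebra computation, and conclude by Chernoff's theorem on the complete Riemannian manifold $(\M,g^r)$. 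This is the correct and standard route, and your Krein-to-Hilbert reduction in the first paragraph is exactly right, including the equivalence with the statement $D^* = -\J \overline{D}\J$.

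Two technical remarks. First, the concern about the discrepancy between the Lorentzian spin connection $\nabla^S$ and the Riemannian spin connection of $g^r$ is superfluous: Chernoff's theorem requires only a formally symmetric first-order differential operator with uniformly bounded propagation speed on a complete Riemannian manifold, and the propagation speed is governed entirely by the principal symbol, a pointwise Clifford endomorphism independent of any connection. You already obtain formal symmetry directly from the Krein symmetry of $D$ and the intertwining property of $\J$, so no comparison with the $g^r$-Dirac operator is needed. Second, the step you correctly flag as the crux --- the pointwise estimate $\|\sigma_{i\J D}(\xi)\|_{\langle\cdot,\cdot\rangle_x} \leq C\,|\xi|_{g^r}$ --- is left as a claim; it follows because conjugation by $\J_x = ic(e_x)$ implements the spacelike reflection on Clifford generators, $\J_x\, c(v)\, \J_x = -c(r_x v)$, from which $(\sigma_{i\J D}(\xi))^*\,\sigma_{i\J D}(\xi) = g^r(\xi^\sharp,\xi^\sharp)\,\mathrm{Id}$, so the propagation speed is in fact identically $1$. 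Filling in that Clifford computation would make the outline a complete proof.
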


In the following, we will refer to \emph{complete} Lorentzian manifolds in the sense of the completeness condition described in Theorem \ref{thstro}.

We thus have a prescription to construct a triple $(\A,\H,D)$ from a complete Lorentzian manifold with an additional operator $\J$. Now, we shall try to axiomatise the construction in a completely algebraic setting and this is where a new open problem emerges. The freedom of the choice of the fundamental symmetry $\J$ implies in general that the signature of the manifold is only pseudo-Riemannian \cite{Pas,Rennie12}. One way to guarantee the correct Lorentzian signature of the metric is to force the fundamental symmetry to correspond to the first flat gamma matrix $\J=i \gamma^0$. A possibility to impose this condition is to fix the fundamental symmetry as a \emph{simple one-form} $\J = - N [D,\T]$. However, such a one-form can be constructed in an unambiguous way only on globally hyperbolic manifolds \cite{F4,F5}, therefore we now restrict ourselves to this class.

\begin{proposition}
If $(\M,g)$ is a complete globally hyperbolic manifold, then $\J = - N [D,\T]$ is a suitable fundamental symmetry where $\T$ is a smooth global time function (temporal function) which splits the metric and $N$ is the lapse function.
\end{proposition}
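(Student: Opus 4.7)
The plan is to construct $\J$ explicitly as the Clifford action by a distinguished unit timelike vector field and then to verify mechanically the defining properties of a fundamental symmetry. I would start from the Bernal-Sánchez splitting theorem, which ensures that any globally hyperbolic manifold admits a smooth Cauchy temporal function $\T$ for which
$$ g = -N^2 \, d\T \otimes d\T + h_\T, $$
where $N = \bigl(-g^{-1}(d\T, d\T)\bigr)^{-1/2}$ is the lapse and $h_\T$ is a smooth one-parameter family of Riemannian metrics on the level sets $\{\T = \text{const}\}$. The future-directed unit normal to these Cauchy slices is $e := N \nabla \T$, and it satisfies $g(e, e) = -1$.

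Using the commutator identity $[D, f] = -i c(df)$ recalled in Section \ref{secintrononcommu}, the one-form $\J$ simplifies to $\J = -N[D, \T] = iN c(d\T) = i c(e^\flat)$, where $e^\flat = N d\T$ is the covector dual to $e$. From this explicit form the algebraic properties are immediate: the Clifford relation $c(v)^2 = g(v, v)$ in signature $(-, +, \dots, +)$ gives $\J^2 = -c(e)^2 = -g(e, e) = 1$; and in a local orthonormal frame adapted to $e$, the operator $c(e)$ reduces to $\pm \gamma^0$, which is anti-Hermitian under the convention of the paper, so $\J = i c(e)$ is bounded and self-adjoint on $\H$ with norm one.

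The intertwining relation $(\cdot, \cdot) = \scal{\cdot, \J \cdot}$ then holds by construction: as recalled before Theorem \ref{thstro}, the Krein product on $\Gamma(\M, S)$ is obtained from the Hilbert product by Clifford multiplication with $i c(e_x)$ for any unit timelike field, and here we obtain exactly this with $e = N \nabla \T$. To complete the identification with Theorem \ref{thstro}, one exhibits the associated spacelike reflection $r(v) = v + 2 g(v, e) e$; the induced Riemannian metric $g^r = N^2 \, d\T \otimes d\T + h_\T$ is the natural cylindrical metric coming from the splitting, and the \emph{complete} hypothesis on $(\M, g)$ is precisely the requirement that $g^r$ is geodesically complete.

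The main subtlety concerns making sense of the operator $[D, \T]$ itself, since a global temporal function is typically unbounded on $\M$ and need not lie in an obvious unitisation of $\A$. This is precisely the motivation for the preferred unitisation $\widetilde{\A}$ of causal functions introduced above. Even when $d\T$ is unbounded, the product $N d\T = e^\flat$ is a pointwise unit covector, so the combination $-N[D, \T] = i c(e^\flat)$ extends to a bounded operator on $\H$; the remaining technical work lies in formalising this product within the $\widetilde{\A}$ framework so that the full expression $\J = -N[D, \T]$ is rigorously an element of the one-forms generated by $\widetilde{\A}$ and $D$.
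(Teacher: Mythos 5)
Your argument takes essentially the same route as the paper's: invoke the Bernal--S\'anchez smooth splitting theorem, use the identity $[D,\T]=-ic(d\T)$ to exhibit $\J=-N[D,\T]=iNc(d\T)$ as Clifford multiplication by the normalised timelike covector $e^\flat=Nd\T$, and read off $\J^2=1$ and self-adjointness from the Clifford relations and the anti-Hermiticity of $\gamma^0$. Your observation that $\J=ic(e^\flat)$ is automatically bounded of norm one because $e^\flat$ is a pointwise unit covector is a somewhat cleaner disposal of the boundedness issue than the paper's rescaling-by-$\arctan$ remark, which appears just after its proof, but otherwise the two arguments coincide.
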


\begin{proof}
On the strength of the smooth version of Geroch's splitting theorem \cite{BS04}, there always exists a global smooth function $\T$ on $\M$ such that the metric reads $g = - N^2 d^2\T + g_\T$ where $g_\T$ is a set of Riemannian metrics defined on the Cauchy hypersurfaces at constant $\T$. We note that the completeness condition of Theorem \ref{thstro} simply says that the manifold $\M$ needs to be complete under the metric $g^r = N^2 d^2\T + g_\T$. Here $d\T$ is a timelike vector field and, up to a normalisation, it defines a fundamental symmetry $\J=\Omega i c(d\T) = \Omega i \tilde\gamma^0$, where $\tilde\gamma^0$ denotes the first \emph{curved} gamma matrix respecting $(\tilde\gamma^0)^2=g^{00}=-\frac{1}{N^2}$ since the metric splits. We have $\tilde\gamma^0 = N^{-1}\gamma^0$ which implies $\J^2 = \Omega^2 N^{-2} = 1$, so the normalisation is given by the lapse function. From the relation $[D,\T]=-ic(d\T)$, we get the formula $\J=-N [D,\T]$.
\end{proof}

\vspace{0.2cm}

Let us note that typically the time function $\T$ is unbounded, since globally hyperbolic space-times are not compact. Therefore, $[D,\T]$ need not in general be bounded either. However, $\T$ (as strictly growing along future directed curves) can be rescaled to a bounded function, for instance via a composition with $\arctan$. Hence, as $\T$ is smooth, one can always choose the time function $\T$ in such a way that $[D,\T]=-ic(d\T)$ is bounded.

Also, it is convenient to assume that the lapse function, which is a smooth bounded function, is a Hermitian element in the preferred unitisation $\widetilde\A \subset C^\infty_b(\M)$. In order to guarantee a suitable  representation of the time function $\T$ on $\H$, we will require that $\T$ is a self-adjoint operator on $\H$ with a suitable domain and such that $(1+\T^2)^{-\frac{1}{2}} \in \widetilde\A$. We now show that such a choice of fundamental symmetry restricts the possible signatures to those of a Lorentzian form.

\vspace{0.2cm}

\begin{proposition}
Let us assume that a Lorentzian triple $(\A,\H,D)$ with a fundamental symmetry $\J$ corresponds to a complete pseudo-Riemannian spin manifold $(\M,g)$. If the fundamental symmetry is given by $\J=-N  [D,\T]$ for some smooth functions $N$ and $\T$ on $\M$, then the geometry is Lorentzian and the metric admits a global splitting.
\end{proposition}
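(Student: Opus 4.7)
The plan is to unfold the algebraic identity $\J^2 = 1$ into a pointwise geometric constraint on $d\T$, use the spacelike-reflection interpretation of a fundamental symmetry to rule out any signature other than Lorentzian, and finally integrate to a global metric splitting.

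First I would compute $\J$ explicitly. The Clifford identity $[D,\T] = -ic(d\T)$ gives $\J = iNc(d\T)$, so
\begin{equation*}
\J^2 \eq -N^2\, c(d\T)^2 \eq -N^2\, g^{-1}(d\T, d\T)\cdot \mathbf{1},
\end{equation*}
where the second equality uses $c(\omega)^2 = g^{-1}(\omega,\omega)$, consistent with the paper's flat-case identity $(\tilde\gamma^0)^2 = g^{00}$. The defining relation $\J^2 = 1$ therefore becomes the pointwise identity $g^{-1}(d\T,d\T) = -N^{-2}$, which simultaneously shows that $N$ is nowhere vanishing and that $d\T$ is timelike everywhere, so that $\nabla\T$ is a nowhere-zero timelike vector field.

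Second, I would exploit the spacelike-reflection interpretation of $\J$ discussed just before Theorem~\ref{thstro}. The operator $\J = iNc(d\T) = ic(Nd\T)$ corresponds to the tangent-bundle reflection $r$ flipping $\nabla\T$ and fixing $(\nabla\T)^\perp$. Decomposing any $v = \alpha\nabla\T + w$ with $g(\nabla\T, w) = 0$, a direct computation yields
\begin{equation*}
g^r(v,v) \eq g(v, rv) \eq -\alpha^2\, g(\nabla\T,\nabla\T) + g(w,w) \eq \frac{\alpha^2}{N^2} + g(w,w).
\end{equation*}
The requirement that $g^r$ be Riemannian then forces $g|_{(\nabla\T)^\perp}$ to be positive definite, which together with the timelike direction along $\nabla\T$ pins the signature of $g$ down to exactly $(1,n-1)$, i.e.\ Lorentzian.

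Third, with signature Lorentzian and $\nabla\T$ globally timelike, $\T$ is a temporal function and the orthogonal decomposition $T_p\M = \mathbb{R}\nabla\T \oplus (\nabla\T)^\perp$ gives the pointwise form $g = -N^2\, d\T^2 + g_\T$, where $g_\T$ is the induced Riemannian metric on the level sets $\Sigma_t = \T^{-1}(t)$. To globalise, I would integrate the flow of $Z = -N^2\nabla\T$, which satisfies $d\T(Z) = 1$ so that $\T$ parametrises the flow lines; boundedness of $N$ gives $\norm{Z}_{g^r} = N \leq \sup N < \infty$, and the completeness hypothesis of Theorem~\ref{thstro} on $g^r$ then ensures the flow is complete, producing a diffeomorphism $\M \cong \T(\M) \times \Sigma_0$ in which the metric takes the claimed splitting form.

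The main obstacle I anticipate is step two: the text flags that a fundamental symmetry is only \emph{often} associated with a spacelike reflection, so one must justify why the specific algebraic presentation $\J = -N[D,\T]$, rather than the abstract Krein condition $\J^2 = 1$ alone, forces the reflection interpretation and hence the Riemannian character of $g^r$. Once this identification is secured, the signature argument is algebraic and pointwise, and the global splitting then drops out cleanly from the standing completeness assumption.
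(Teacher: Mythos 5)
Your step one is sound and coincides with the paper's starting point: $\J^2=1$ together with the Clifford identity $c(d\T)^2 = g^{-1}(d\T,d\T)$ forces $g^{-1}(d\T,d\T)=-N^{-2}$, so $d\T$ is nowhere-zero and timelike. Your step three also reaches the right conclusion, although the paper obtains the splitting more directly: once $\T$ is taken as the coordinate $x^0$, it shows $g^{0\mu}=0$ for $\mu\neq 0$, so the splitting is immediate without integrating any flow.

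The problem is step two, which you flagged yourself. The argument is circular: you write down the reflection $r$ fixing $(\nabla\T)^\perp$, compute $g^r(v,v) = \alpha^2/N^2 + g(w,w)$, and then \emph{require} $g^r$ to be Riemannian in order to conclude that $g|_{(\nabla\T)^\perp}$ is positive definite. But that requirement is precisely what needs to be proved. Having $d\T$ timelike is perfectly consistent with signature $(p,n-p)$ for any $p\geq 1$; in signature $(2,n-2)$, for instance, $g|_{(\nabla\T)^\perp}$ has signature $(1,n-2)$ and $g^r$ is indefinite. The axioms of a fundamental symmetry (the conditions $\J^2=1$, $\J^*=\J$) do not by themselves rule this out, and the parenthetical remark in the text that a fundamental symmetry is ``often'' a spacelike reflexion is not a lemma you can invoke.

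What actually pins down the signature is the Krein self-adjointness condition $D^*=-\J D\J$, which your proof never uses. The paper's argument is representation-theoretic: on the genuine Hilbert space $\H=L^2(\M,S)$ each curved gamma matrix $\tilde\gamma^\mu$ is either Hermitian (positive metric eigenvalue) or anti-Hermitian (negative one). Self-adjointness of $\J=Ni\tilde\gamma^0$ forces $\tilde\gamma^0$ anti-Hermitian, so $g^{00}<0$. Then $D^*=-\J D\J$, rewritten as $(\J D)^*+\J D=0$, yields the pointwise relation $(\tilde\gamma^\mu)^*\tilde\gamma^0+\tilde\gamma^0\tilde\gamma^\mu=0$. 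If some $\tilde\gamma^\mu$ with $\mu>0$ were also anti-Hermitian, this relation would give $\tilde\gamma^\mu\tilde\gamma^0=\tilde\gamma^0\tilde\gamma^\mu$, hence by the Clifford relation $\tilde\gamma^\mu\tilde\gamma^0=g^{\mu 0}\cdot\mathbf 1$, which is impossible for $\mu\neq 0$. So every $\tilde\gamma^\mu$ with $\mu>0$ is Hermitian and the remaining metric eigenvalues are all positive; as a bonus, $g^{\mu 0}=0$ drops out of the same relation, giving the splitting. To repair your proof you would need to import this Hermiticity analysis (or an equivalent use of $D^*=-\J D\J$); the reflection picture alone cannot do the work.
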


\vspace{0.2cm}

\begin{proof}
For sake of clarity, we will denote by $\tilde\gamma^\mu = e^\mu_a \gamma^a$ the curved gamma matrices. The Dirac operator reads $D = -i \tilde\gamma^{\mu} \nabla^S_\mu$ and the Clifford relations are $\tilde\gamma^{\mu} \tilde\gamma^{\nu} + \tilde\gamma^{\nu} \tilde\gamma^{\mu} = 2 g^{\mu\nu}$, where $\tilde\gamma^{\mu}$ are chosen to be either Hermitian or anti-Hermitian. In arbitrary local coordinates, the fundamental symmetry reads $\J=N i \tilde\gamma^\mu \partial_\mu\T$ so the condition $\J^2=1$ can only be respected if the gradient of $\T$ does not vanish anywhere. Hence, $\T$ can be chosen as a first (global) coordinate $x^0=\T$ and we get $\J=N i \tilde\gamma^0$. Then $\J^2 = - N^{2} g^{00} = 1 \implies g^{00} = -N^{-2} < 0 $. We also note that $\tilde\gamma^0$ is forced to be anti-Hermitian due to the self-adjointness of $\J$.

From the relation $D^*=-\J D\J \Longleftrightarrow (\J D)^*  + \J D = 0$ and following the arguments of \cite[Proposition 9.13]{Gracia-Bondia} (see also \cite{Baum}), we get
$$  
(\tilde\gamma^\mu)^* i \tilde\gamma^0 (-i \nabla^S_\mu)^*+ i \tilde\gamma^0 \tilde\gamma^\mu  (-i \nabla^S_\mu) = \left( (\tilde\gamma^\mu)^*  \tilde\gamma^0 +  \tilde\gamma^0 \tilde\gamma^\mu \right) \nabla^S_\mu = 0. 
$$
Now we can discuss the Hermicity of the matrices $\tilde\gamma^\mu$ for $\mu>0$. If $\tilde\gamma^\mu$ has the same Hermicity as $\tilde\gamma^0$ (so is anti-Hermitian), we get $\tilde\gamma^\mu  \tilde\gamma^0 =  \tilde\gamma^0 \tilde\gamma^\mu$ which implies $ \tilde\gamma^\mu \tilde\gamma^0 = g^{\mu 0} $ by the Clifford relations. But this is impossible since  $\tilde\gamma^0 \tilde\gamma^\mu$ cannot be a multiple of the identity matrix for $\mu \neq 0$. Hence, all $\tilde\gamma^\mu$ are Hermitian for $\mu > 0$ and correspond to positive eigenvalues of the metric. Moreover, we get $\tilde\gamma^\mu  \tilde\gamma^0 +  \tilde\gamma^0 \tilde\gamma^\mu = 0 = g^{\mu0} = g^{0\mu}$ and the metric admits a global splitting.
\end{proof}

Let us stress, however, that the above proposition cannot be promoted to a Lorentzian version of the reconstruction theorem for spectral triples in a straightforward way. Rather, it should be understood as follows: assuming that there exists a reconstruction theorem for pseudo-Riemannian spin manifolds, the proposed form of the fundamental symmetry restricts the possibilities to manifolds with a Lorentzian signature. A complete reconstruction theorem analogous to that of \cite{C13} is beyond our reach for the moment. Also, the chosen fundamental symmetries do not cover all possible Lorentzian spin manifolds but only those which admit a global splitting. In particular, these encompass the set of complete globally hyperbolic manifolds.

We now come to our \emph{practical} definition of a Lorentzian spectral triple. This definition is certainly not an ultimate one and is supposed to evolve in the course of future investigation. On the other hand, we take it as a base for our studies of causality as it covers a large part of Lorentzian manifolds, in particular the globally hyperbolic space-times.

\begin{definition}
\label{deflost}
A~Lorentzian spectral triple is given by the data $(\mathcal{A},\widetilde{\mathcal{A}},\H,D,\mathcal{J})$
with:
\begin{itemize}
\itemsep=0pt \item A~Hilbert space $\H$.
\item A~non-unital pre-$C^*$-algebra $\mathcal{A}$, with a~faithful representation as bounded operators on
$\H$.
\item A~preferred unitisation $\widetilde{\mathcal{A}}$ of $\mathcal{A}$, which is also a~pre-$C^*$-algebra
with a~faithful representation as bounded operators on $\H$ and such that $\mathcal{A}$ is an ideal of
$\widetilde{\mathcal{A}}$.
\item An unbounded operator $D$, densely defined on $\H$, such that:
\begin{itemize}\itemsep=0pt
\item $\forall a\in\widetilde{\mathcal{A}}$,   $[D,a]$ extends to a~bounded operator on $\H$, \item $\forall
a\in\mathcal{A}$,   $a(1 + \scal{D}^2)^{-\frac 12}$ is compact, with $\scal{D}^2 = \frac 12 (D D^* + D^*
D)$.
\end{itemize} 
\item A~bounded operator $\mathcal{J}$ on $\H$ such that:
\begin{itemize}\itemsep=0pt
 \item $\mathcal{J}^2=1$, \item $\mathcal{J}^*=\mathcal{J}$, \item $[\mathcal{J},a]=0$,
$\forall a\in\widetilde{\mathcal{A}}$, \item $D^*=-\mathcal{J} D \mathcal{J}$, \item
$\mathcal{J}  = -N [D,\mathcal{T}]$ for $N\in\widetilde\A$, $N>0$ and for some (possibly unbounded) self-adjoint operator $\mathcal{T}$ with domain
\mbox{$\text{Dom}(\mathcal{T})  \subset  \H$} and such that  $\left(1+ \mathcal{T}^2 \right)^{-\frac{1}{2}}\in
\widetilde{\mathcal{A}}$.
\end{itemize}
\end{itemize}
\end{definition}

\begin{definition}
A~Lorentzian spectral triple is even if there exists a~$\mathbb Z_2$-grading $\gamma$ such that
$\gamma^*=\gamma$, $\gamma^2=1$, $[\gamma,a] = 0$ $\forall a\in\widetilde{\mathcal{A}}$, $\gamma
\mathcal{J} =- \mathcal{J} \gamma$ and $\gamma D =- D \gamma $.
\end{definition}

\section{Causality in noncommutative geometry}

Causality is a key concept in natural sciences and especially in physics (see \cite{CausalUniverse} for a comprehensive review). It puts fundamental restrictions on the admissible physical processes independently of any concrete model. Causality is the cornerstone of Special Relativity (and a fortiori its general version), hence it may be viewed as the most important information coming from the Lorentzian signature of the space-time. Also, it is one of the basic axioms of every quantum field theory. Therefore, when considering the physical models based on noncommutative geometry it is of utmost importance to understand their causal structure and its implications for physics.

Causality can be given a geometrical interpretation as a link between two points (events), i.e.~specific positions at specific times. On a (path) connected smooth manifold, every two points can be connected by a curve, so, a priori, they can be linked. If we consider a Lorentzian manifold, we can distinguish specific  types of curves in function of the behaviour of their tangent vectors. In particular, we discern timelike curves $\gamma$, the tangent vector of which is everywhere timelike ($g(\dot \gamma,\dot \gamma) < 0$) and causal curves, where the tangent vector is allowed also to be null ($g(\dot \gamma,\dot \gamma) \leq 0$). Along such curves a physical signal can be transmitted  with a speed not exceeding the speed of light (in vacuum).  Two points (events) are then said to be  \emph{causally related} if there exists a causal curve linking them. Moreover, if the manifold is time-oriented, specific future and past directions can be chosen. This leads to the familiar cone picture.

The nature of causality in mathematical physics is therefore entirely geometric and the information is extracted using curves and tangent vectors. Since there is no notion of curves or tangent vectors in noncommutative geometry, the point of view on causality must be completely revised, and the description should only depend on the data included in Definition \ref{deflost}.

The first question one is confronted with when leaving the peaceful land of commutative geometry, is: `Where do the causal relations actually take place?'. In other words: `What are the ``noncommutative'' events?'. We have already argued that pure states of a (possibly noncommutative) $C^*$-algebra could be seen as a counterpart of points. On the strength of this correspondence, we shall regard the space of states as a proper setting for the causal structure. Let us stress however, that there are other possibilities to define points of a noncommutative space -- for instance using maximal ideals of irreducible representations. These notions are all equivalent in the commutative case, but bifurcate in the noncommutative one (see the chapter by A. Sitarz in this volume). In the last section, where some physical consequences of our results will be discussed, we will elaborate on the motivation behind using the space of states.

Next, we need to characterise the causal curves in an algebraic manner. In accord with the philosophy of noncommutative geometry, this should be done in a dual way by using a specific class of functions called \emph{causal functions}.

\begin{definition}
A causal function is a real-valued function which is non-decrea\-sing along every future-directed causal curve.
\end{definition}

Global time functions and constant functions are examples of causal functions. In order to comply with the formalism of Lorentzian spectral triples, we will restrict to a subclass of smooth bounded causal functions belonging to $\widetilde\A$. The set of such causal functions is denoted by $\C \subset \widetilde\A$ and has the structure of a convex cone, i.e.~$\forall f,g \in \C$, $\forall \lambda \geq 0$, $\lambda f + g \in \C$. The set $\C$ spans the whole algebra, i.e.~$\Span_{\setC}(\C)$ is a unital pre-$C^*$-algebra which contains $\A$. Since the choice of the preferred unitisation $\widetilde\A$ is free, we will always choose $\widetilde\A=\Span_{\setC}(\C)$. Then, the following proposition holds:

\begin{proposition}[\cite{Bes}]
If $\M$ is a globally hyperbolic space-time, then the set of causal functions completely determines the causal structure of $\M$ by:
$$\forall p,q\in\M,\qquad p \preceq q \quad \text{ iff }\quad \forall f\in\C,\ f(p) \leq f(q).$$
\end{proposition}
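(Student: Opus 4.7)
The forward implication ($\Rightarrow$) is immediate from the definition of a causal function: any future-directed causal curve from $p$ to $q$ exhibits the monotonicity of every $f \in \C$, giving $f(p) \leq f(q)$.

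For the converse, I argue by contrapositive: assuming $p \not\preceq q$, I construct some $f \in \C$ with $f(p) > f(q)$. The principal tool is the Bernal--S\'anchez theorem, which guarantees that every globally hyperbolic space-time admits a smooth Cauchy temporal function $\tau : \M \to \setR$, i.e.\ a smooth function with everywhere future-directed timelike gradient. Such $\tau$ is \emph{strictly} increasing along every future-directed causal curve, hence lies in the cone of causal functions; although typically unbounded, composing with a smooth strictly increasing bounded map like $\arctan$ yields $\arctan\circ\tau \in \C \subset \widetilde{\A}$ while preserving the strict ordering of values.

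Two cases then arise. If $q \preceq p$ with $q \neq p$, any Cauchy temporal function already satisfies $\tau(q) < \tau(p)$ by strict monotonicity, so $f = \arctan\circ\tau$ separates as required. The substantive case is when $p$ and $q$ are spacelike separated. Here one exploits the freedom in selecting the temporal function: since global hyperbolicity makes $J^-(q)$ closed and $p \not\in J^-(q)$, one can find a small spacelike achronal disk through $q$ lying inside $\M \setminus J^+(p)$, which by the Bernal--S\'anchez extension theorem embeds into a smooth spacelike Cauchy hypersurface $\Sigma$. By construction $q \in \Sigma$ and $p \in I^+(\Sigma)$, so the associated Cauchy temporal function $\tau$ satisfies $\tau(q) < \tau(p)$, and a final composition with $\arctan$ delivers the desired $f \in \C$.

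The main obstacle I foresee is this spacelike case: arranging the small disk through $q$ so that its Cauchy extension still places $p$ strictly in the future. An alternative route, avoiding the extension theorem, is to perturb an arbitrary Cauchy temporal function $\tau_0$ by $\tau = \tau_0 + \varepsilon\psi$, where $\psi \geq 0$ is a smooth bump supported in an open neighborhood of $p$ disjoint from $J^-(q)$ and $\varepsilon > 0$ is chosen small enough that $d\tau_0 + \varepsilon\,d\psi$ remains future-directed timelike on the compact set $\mathrm{supp}\,\psi$; since $\psi$ vanishes on $J^-(q)$, this leaves $\tau(q) = \tau_0(q)$ unchanged while raising $\tau(p)$, and tuning the profile and amplitude of $\psi$ (possibly via iteration) yields $\tau(p) > \tau(q)$ without destroying the temporal character.
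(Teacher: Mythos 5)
The paper does not supply a proof for this proposition --- it is quoted from Besnard [Bes] --- so there is no in-text argument to compare against; I evaluate your attempt on its own.

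Your forward direction and the case $q \prec p$ are sound: a Cauchy temporal function has everywhere timelike gradient, hence is strictly increasing along every future-directed causal curve (timelike \emph{or} null), and composing with $\arctan$ keeps the function smooth, bounded and in the cone $\C$ while preserving the strict inequality.

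The spacelike case has a real gap, and you honestly flag it. In your first route, the disk through $q$ inside $\M \setminus J^+(p)$ only guarantees that no point of the disk lies to the causal future of $p$; it gives no control over where the extension $\Sigma$ places $p$. A Cauchy hypersurface partitions $\M$ into $I^-(\Sigma)\sqcup\Sigma\sqcup I^+(\Sigma)$, and nothing in the construction forces $p$ into the last piece --- $p$ could equally land on $\Sigma$ outside the disk, or in $I^-(\Sigma)$. Your bump-perturbation alternative has the quantitative problem you acknowledge: keeping $\nabla\tau_0 + \varepsilon\nabla\psi$ timelike on $\mathrm{supp}\,\psi$ caps $\varepsilon\psi(p)$ from above, and you do not show that iteration yields a temporal function with $\tau(p) > \tau(q)$ while preserving smoothness and the timelike character.

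The gap closes with a small modification to the first route. Since $J^+(q)\cup J^-(q)$ is closed (global hyperbolicity) and does not contain $p$, one may pick $p' \in I^-(p)$ close enough to $p$ that $p'$ is still causally unrelated to $q$. The two-element set $\{p', q\}$ is compact, acausal and trivially spacelike, so the Bernal--S\'anchez extension theorem furnishes a smooth spacelike Cauchy hypersurface $\Sigma$ containing both points. Then $p \in I^+(p') \subseteq I^+(\Sigma)$ by transitivity, and the temporal function $\tau$ of the associated Cauchy splitting, normalised so that $\Sigma = \tau^{-1}(0)$ and composed with $\arctan$, gives the separating $f \in \C$ with $f(q) = 0 < f(p)$. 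What you were missing is that forcing $p$ strictly above the extended hypersurface requires including a strict past point $p'$ of $p$ --- and not just $q$ --- in the compact acausal set to be extended.
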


Therefore, we have a complete characterisation of causality, which uses the cone $\C$ of causal functions instead of curves.

The most difficult part however, is to have an algebraic characterisation of the cone $\C$. Indeed, if one takes an arbitrary convex cone (with a few additional axioms), then it is possible to recover every possible partial order structure on the manifold \cite{Bes}. In order to restrict the possible partial order structures to those corresponding to a causal structure defined by a Lorentzian metric, we shall introduce an operatorial constraint based on the Dirac operator $D$ and the fundamental symmetry $\J$.

\begin{theorem}[\cite{F6}]\label{thmopcausal}
Let $(\A,\widetilde\A,\H,D,\J)$ be a commutative Lorentzian spectral triple constructed from a complete globally hyperbolic space-time $\M$. Then $f\in\widetilde\A$ is a causal function if and only if 
$$\forall \phi \in \H, \quad\scal{\phi,\J[D,f] \phi} \leq 0,$$
where $\scal{\cdot,\cdot}$ is the inner product on $\H$.
\end{theorem}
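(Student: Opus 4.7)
The plan is to reduce the operator inequality to a pointwise geometric condition on the 1-form $df$, and then to identify that condition with the defining property of a causal function. Since $\J = ic(\theta)$ with $\theta = N\,d\T$ the unit timelike 1-form and $[D,f] = -ic(df)$, the operator $\J[D,f]$ acts on $\H = L^2(\M,S)$ as the fibrewise endomorphism
$$B_x \eq c(\theta_x)\,c(df_x).$$
Using $\J^* = \J$, $D^* = -\J D \J$ and the reality of $f$, a short check gives $B^* = B$. Because $B$ is bounded and acts fibrewise, a standard concentration argument on smooth compactly supported spinors shows that $\scal{\phi,B\phi}\le 0$ for all $\phi\in\H$ is equivalent to $B_x \le 0$ on the spinor fibre $S_x$ for every $x\in\M$.

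The second step is a direct Clifford computation. At $x\in\M$, decompose $df_x = \alpha\,\theta_x + v_\perp$ with $g(\theta_x,v_\perp)=0$, so that $v_\perp$ is spacelike. The relations $c(\theta_x)^2 = g(\theta_x,\theta_x) = -1$ and $c(\theta_x)c(v_\perp) + c(v_\perp)c(\theta_x) = 0$ yield
$$B_x \eq -\alpha + c(\theta_x)\,c(v_\perp),\qquad \bigl(c(\theta_x)\,c(v_\perp)\bigr)^2 \eq g(v_\perp,v_\perp)\,\mathrm{id},$$
so $B_x$ is diagonalisable with eigenvalues $-\alpha \pm \sqrt{g(v_\perp,v_\perp)}$. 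Consequently $B_x\le 0$ iff $\alpha\ge 0$ and $\alpha^2\ge g(v_\perp,v_\perp)$. Since $\alpha = -g(df_x,\theta_x)$ and $g(df_x,df_x) = -\alpha^2 + g(v_\perp,v_\perp)$, this rephrases intrinsically as
$$g(df_x,\theta_x)\le 0\qquad\text{and}\qquad g(df_x,df_x)\le 0.$$

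To conclude, these two inequalities are exactly the condition that $f$ be non-decreasing along every future-directed causal curve. Let $E$ denote the future-directed unit timelike vector field with $E^\flat = -\theta$. Then $g(df,\theta)\le 0$ reads $E(f)\ge 0$, while $g(df,df)\le 0$ says that $df$ is a causal covector. Decomposing an arbitrary future causal vector as $v = aE + v_s$ with $a\ge 0$ and $|v_s|_{g_\T}\le a$, one computes $df(v) = aE(f) + df_\perp(v_s)$, whose non-negativity for all admissible $v_s$ is equivalent to $E(f)\ge |df_\perp|_{g_\T^{-1}}$, which is exactly the conjunction of the two displayed inequalities. Hence the operator condition on $\H$ is equivalent to $f\in\C$.

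The main technical point I expect to encounter is the passage from the global inequality on $\H$ to the fibrewise one. Although $D$ itself is unbounded, $\J[D,f]$ is bounded and acts fibrewise on the spinor bundle; if some $B_{x_0}$ admitted a vector $\psi_0\in S_{x_0}$ with $\scal{\psi_0,B_{x_0}\psi_0}>0$, then a sharply peaked smooth section of $S$ near $x_0$ agreeing with $\psi_0$ at $x_0$ would force $\scal{\phi,B\phi}>0$, contradicting the hypothesis; conversely, pointwise non-positivity integrates to the $L^2$-inequality along $\M$.
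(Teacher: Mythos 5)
The paper does not actually prove this theorem in the text --- it defers to \cite{F6} and gives only an intuitive sketch via the sign of the Krein inner product. Your argument is a correct, complete version of that idea, and it follows the same operatorial route as the cited proof: identify $\J[D,f]$ with the fibrewise Clifford endomorphism $c(\theta)c(df)$ (self-adjoint since $[D,f]^*=\J[D,f]\J$ and $\J^2=1$), reduce the $L^2$ inequality to the pointwise spectral condition $B_x\le 0$, diagonalise to find eigenvalues $-\alpha\pm\sqrt{g(v_\perp,v_\perp)}$, and translate non-positivity into the pair $g(df,\theta)\le 0$, $g(df,df)\le 0$, which is precisely the covariant statement that $df$ is a past-directed causal covector, i.e.\ that $f$ is non-decreasing along future-directed causal curves. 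The one place you are slightly informal is the localisation step (passing from $\scal{\phi,B\phi}\le 0$ for all $\phi\in\H$ to $B_x\le 0$ at every $x$); this is a standard fact for bounded multiplication operators with continuous endomorphism-valued symbol, and it is harmless here since $f\in\widetilde\A\subset C^\infty_b(\M)$ and $N\,d\T$ are smooth, so $x\mapsto B_x$ is continuous and the bump-function argument closes the gap.
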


This theorem can be explained intuitively. We saw in Section \ref{secLostruct} that with a smooth global time function $\T$, the operator $\mathcal{J}  = -N [D,\mathcal{T}]$ allows us to go back and forth from the positive inner product of the Hilbert space $\H$ to the corresponding Krein inner product (so we have $(\phi,\J \phi) = \scal{\phi,\J^2 \phi} = \scal{\phi,\J \left(-N[D,\mathcal{T}] \right ) \phi} > 0$ for $\phi \neq 0$). If instead of using the non-degenerate operator $[D,\mathcal{T}]$ we extend to a possibly degenerate operator $[D,f]$ with $f\in\C$ (a function non-decreasing along future directed causal curves instead of strictly increasing), then we recover a non-negative degenerate inner product $\scal{\phi,\J \left(-N[D,f] \right ) \phi} \geq 0$. The requirement of such a condition fulfilled for every $\phi\in\mathcal{H}$ can be taken as a complete characterisation of a causal function. However, this is only an intuitive explanation. The complete formal proof of this theorem, including necessary and sufficient conditions, can be found in \cite{F6}.

The operatorial formulation of Theorem \ref{thmopcausal} encourages us to propose the following  definition of a causal structure for noncommutative geometries.

\begin{definition}\label{defncauscone}
Let us consider the cone $\mathcal{C}$ of all Hermitian elements $a \in \widetilde{\mathcal{A}}$ respecting
\begin{equation}\label{eqopcauscond}
\forall \; \phi\in\H,\qquad\scal{\phi,\mathcal{J}[D,a]\phi}\leq0,
\end{equation}
where $\scal{\cdot,\cdot}$ is the inner product on $\H$.
If the following condition is fulfilled:
\begin{equation} \label{eqncondcausal}
\overline{\Span_{{\mathbb C}}(\mathcal{C})}=\overline{\widetilde{\mathcal{A}}},
\end{equation}
then $\mathcal{C}$ is called a \emph{causal cone} and defines a partial order relation on
$S(\widetilde{\mathcal{A}})$ by
$$\forall \omega,\eta\in S(\widetilde{\mathcal{A}}),\qquad\omega\preceq\eta\qquad\text{iff}\qquad\forall a\in\mathcal{C},\quad\omega(a)\leq\eta(a).$$
\end{definition}

The above definition is further motivated by the following theorem:

\begin{theorem}[\cite{F6}]\label{thmreconstruction}
Let $(\mathcal{A},\widetilde{\mathcal{A}},\H,D,\mathcal{J})$ be a commutative Lorentzian spectral
triple constructed from a complete globally hyperbolic Lorentzian manifold $\mathcal{M}$, and let us define the
following subset of pure states:
$$\mathcal{M}(\widetilde{\mathcal{A}})=\big\{\omega\in P(\widetilde{\mathcal{A}}) : \mathcal{A}\not\subset\ker\omega\big\}\subset S(\widetilde{\mathcal{A}}).$$
Then the causal relation $\preceq$ on $S(\widetilde{\mathcal{A}})$ restricted to $\mathcal{M}(\widetilde{\mathcal{A}}) \cong P(\A) \cong \mathcal{M}$ corresponds to the usual causal relation on
$\mathcal{M}$.
\end{theorem}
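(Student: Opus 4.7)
The plan is to chain together three identifications so that the noncommutative partial order on $\mathcal{M}(\widetilde{\mathcal{A}})$ reduces to the classical causal order on $\mathcal{M}$. First I would unpack the identification $\mathcal{M}(\widetilde{\mathcal{A}}) \cong P(\mathcal{A}) \cong \mathcal{M}$. Since $\widetilde{\mathcal{A}}$ is commutative, its pure states are characters; the condition $\mathcal{A} \not\subset \ker \omega$ cuts out precisely the characters of the unitisation supported at infinity, leaving those that restrict to nonzero characters on $\mathcal{A} \subset C_0^\infty(\mathcal{M})$. By Gelfand--Naimark applied to $\mathcal{A}$, the remaining states are exactly the evaluation maps $\omega_p(a) = a(p)$ for $p \in \mathcal{M}$.

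Second, I would invoke Theorem \ref{thmopcausal}, which does the analytical heavy lifting: it identifies the operatorially defined cone $\mathcal{C}$ of Definition \ref{defncauscone} with the set of smooth bounded causal functions in $\widetilde{\mathcal{A}}$. The required spanning condition $\overline{\Span_{\mathbb{C}}(\mathcal{C})} = \overline{\widetilde{\mathcal{A}}}$ holds automatically, because the preferred unitisation was defined in Section \ref{secLostruct} precisely as $\widetilde{\mathcal{A}} = \Span_{\mathbb{C}}(\mathcal{C})$.

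Third, I would apply the proposition of \cite{Bes} stating that on a globally hyperbolic space-time the causal functions separate the causal order: $p \preceq q$ in $\mathcal{M}$ iff $f(p) \leq f(q)$ for every causal function $f$. Chaining the three identifications, for $\omega_p, \omega_q \in \mathcal{M}(\widetilde{\mathcal{A}})$ one has
\[
\omega_p \preceq \omega_q \ \Longleftrightarrow\ \forall a \in \mathcal{C},\ \omega_p(a) \leq \omega_q(a) \ \Longleftrightarrow\ \forall a \in \mathcal{C},\ a(p) \leq a(q) \ \Longleftrightarrow\ p \preceq q,
\]
which is the statement of the theorem.

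The main obstacle, and the only step that is not pure bookkeeping, is ensuring that the restricted class of causal functions actually present in $\mathcal{C}$ — namely those that are smooth, bounded, and lie in the particular unitisation $\widetilde{\mathcal{A}}$ — is still rich enough to separate causally unrelated points in the sense of the last equivalence above. The Besnard-type proposition is stated in that generality, but using it requires checking that every continuous causal function separating $p \not\preceq q$ can be approximated, or replaced, by one inside $\widetilde{\mathcal{A}}$. For globally hyperbolic manifolds this follows from the smooth Geroch splitting $\T$ already invoked in Section \ref{secLostruct}: its rescalings through bounded monotone functions (e.g.\ $\arctan$-type smoothings) supply an ample family of smooth bounded causal functions in $\widetilde{\mathcal{A}}$, and Urysohn-type constructions using these building blocks yield enough such functions to reproduce the conclusion of \cite{Bes} within the chosen unitisation. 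Once this point is secured, the rest of the argument is a direct composition of the three ingredients above.
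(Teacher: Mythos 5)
Your proposal is correct and reconstructs exactly the argument the paper implies: it chains the Gelfand identification of $\mathcal{M}(\widetilde{\mathcal{A}})$ with $\mathcal{M}$ (using that $\mathcal{A}$ is an ideal so characters extend uniquely), the operatorial characterisation of causal functions from Theorem~\ref{thmopcausal}, and the separation result of \cite{Bes} quoted just before it, in precisely that order. The ``main obstacle'' you flag --- whether the restricted class $\mathcal{C}$ of smooth bounded causal functions inside $\widetilde{\mathcal{A}}$ still separates causally unrelated points --- is a genuine subtlety, but as quoted in the paper Besnard's proposition is already stated for that restricted class of causal functions, so the heavy lifting is delegated to the citation rather than to a fresh Geroch-splitting argument of the kind you sketch.
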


Definition \ref{defncauscone} and Theorem \ref{thmreconstruction} clarify the role of the unitisation $\widetilde\A$ in the context of causality. Indeed, if the non-unital algebra $\A$ is used instead, then the only element respecting the operatorial condition \eqref{eqopcauscond} is the null element. Hence, the unitisation is technically needed in order to define a valid causal cone $\C$. However, the space of pure states $P(\widetilde\A)$ is larger that the space $P(\A)$ corresponding to the manifold, so we need to get rid of the \emph{extra points} by introducing the subspace $\mathcal{M}(\widetilde{\mathcal{A}})$. In the commutative case, the subspace $\mathcal{M}(\widetilde{\mathcal{A}})$ corresponds to the initial manifold, so the choice of the unitisation does not play a role. It is also the case for almost commutative algebras, as we will see in Section \ref{secalmcom}. However, for some \emph{truly} noncommutative spaces, different unitisations might lead to different causal structures.

The condition \eqref{eqncondcausal} is crucial, since it guarantees that the relation $\preceq$ defines a good partial order structure on the space of states. On a globally hyperbolic space-time, there always exists an unitisation $\widetilde\A$ such that this condition is respected \cite{Bes,F6}. However, if the manifold is acausal (as for example a compact Lorentzian manifold without boundary), such a condition cannot be fulfilled. Indeed, on such manifolds there always exists a closed timelike curve, on which every causal function must be constant. Hence the set of causal functions cannot span the algebra. Therefore, it is natural to consider this condition as a criterion to discern whether a space is causal or not also in the noncommutative regime.

As a digression let us note that although the definition of causality is a large step towards the understanding of Lorentzian metric aspect of noncommutative geometry, it is not the end of the story. Indeed, it is well known that a causal structure gives information about the metric only up to a conformal factor. 
Therefore, an algebraic formulation of a Lorentzian distance, i.e.~a Lorentzian counterpart to \eqref{eqnconnesdist}, is still an open question despite some proposals. In \cite{Mor}, a formula based on d'Alembert operator is presented, but its noncommutative generalisation is tedious and not applicable in the usual framework of spectral triples. In \cite{F6}, following a result of \cite{F3}, the authors have suggested the following practical formulation of a Lorentzian distance:
\begin{definition}\label{defnewdist}
Let $(\A,\widetilde\A,\H,D,\J)$ be an even commutative Lorentzian spectral triple with $\setZ_2$-grading $\gamma$ constructed from a complete globally hyperbolic space-time $\M$ of even dimension. For every two points $p,q\in\M$, we define:
\begin{align}
& \widetilde d(p,q) \label{LorDist} \\
& \;\; = \inf_{f \in C^\infty(\M,\setR)}\set{ \max\set{0,f(q)-f(p)} :
\forall \phi \in \H, \scal{\phi,\J([D,f]+ i\gamma) \phi } \leq 0}. \notag
\end{align}
\end{definition}

This formula (with points traded for states) seems to be a good candidate for a generalisation of the Lorentzian distance formula in noncommutative geometry, as it respects the following properties:
\begin{proposition}[\cite{F6}]
The function $\widetilde d(p,q)$ from Definition \ref{defnewdist} meets all of the properties of a Lorentzian distance, i.e.~
\begin{enumerate}
\item $d(p,p) = 0$,
\item $d(p,q) \geq 0$ for all $p,q\in\M$,
\item if $d(p,q) > 0$, then $d(q,p) = 0$,
\item if $d(p,q) > 0$ and $d(q,r) > 0$, then $d(p,r) \geq d(p,q) + d(q,r)$.\begin{flushright} (`wrong way' triangle inequality)\end{flushright}
\end{enumerate}
Moreover, for every $p,q\in \M$, we have $d(p,q) \leq \widetilde d(p,q)$ where $d(p,q)$ denotes the usual Lorentzian distance between $p$ and $q$.\\[0.3cm]
If $\M$ is the Minkowski space-time of even dimension, we have the equality $d(p,q) = \widetilde d(p,q)$.
\end{proposition}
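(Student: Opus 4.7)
The plan is to verify the four Lorentzian-distance axioms first, since they are formal consequences of the infimum structure, and then tackle the two comparisons with the usual Lorentzian distance. For Property~1, note that $f(p)-f(p)=0$ gives $\max\{0,f(p)-f(p)\}=0$ for every admissible $f$, so $\widetilde d(p,p)=0$ as soon as the admissible set is non-empty. A concrete admissible $f$ is provided by a suitable rescaling of the global time function $\T$: a short calculation in a splitting frame gives $\J[D,\T]=-N^{-1}$, hence $\J([D,c\T]+i\gamma)=-cN^{-1}+i\J\gamma$, and since $i\J\gamma$ is Hermitian with spectrum $\{\pm 1\}$ (it squares to $1$ because $\J\gamma=-\gamma\J$), scaling $\T$ by a large enough $c$ renders the full operator non-positive. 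Property~2 is immediate from $\max\{0,\cdot\}\geq 0$.

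For Property~3, suppose $\widetilde d(p,q)>0$. By the infimum definition, every admissible $f$ then satisfies $\max\{0,f(q)-f(p)\}\geq \widetilde d(p,q)>0$, that is, $f(q)-f(p)\geq \widetilde d(p,q)$; consequently $f(p)-f(q)<0$ and $\max\{0,f(p)-f(q)\}=0$ for every admissible $f$, so $\widetilde d(q,p)=0$. Property~4 follows from the same observation applied to the pairs $(p,q)$ and $(q,r)$: for every admissible $f$ we have $f(q)-f(p)\geq \widetilde d(p,q)$ and $f(r)-f(q)\geq \widetilde d(q,r)$, hence $f(r)-f(p)\geq \widetilde d(p,q)+\widetilde d(q,r)$, and taking the infimum over $f$ yields the reverse triangle inequality.

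The inequality $d(p,q)\leq \widetilde d(p,q)$ is the first substantive step. In a splitting-adapted frame one has $\J=Ni\tilde\gamma^0$ and $[D,f]=-i\tilde\gamma^\mu\partial_\mu f$; using the Clifford relations $\tilde\gamma^\mu\tilde\gamma^\nu+\tilde\gamma^\nu\tilde\gamma^\mu=2g^{\mu\nu}$ together with $\J\gamma=-\gamma\J$, a pointwise analysis of the Hermitian endomorphism $\J([D,f]+i\gamma)$ on the spinor fiber shows that the condition $\scal{\phi,\J([D,f]+i\gamma)\phi}\leq 0$ for all $\phi\in\H$ is equivalent to the pair of pointwise conditions that $f$ is causal (the $[D,f]$ content, as in Theorem~\ref{thmopcausal}) and that its gradient is normalized to $-g(df,df)\geq 1$ (the $i\gamma$ contribution). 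Integrating along any future-directed causal curve $\sigma:[0,1]\to\M$ from $p$ to $q$ and applying the reverse Cauchy--Schwarz inequality for two causal vectors in the same time cone gives $f(q)-f(p)=\int_0^1 \partial_\mu f\,\dot\sigma^\mu\,dt\geq \int_0^1\sqrt{-g(\dot\sigma,\dot\sigma)}\,dt=L(\sigma)$. Taking the supremum over causal curves yields $f(q)-f(p)\geq d(p,q)$, and then the infimum over admissible $f$ gives $\widetilde d(p,q)\geq d(p,q)$.

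For equality in Minkowski, the plan is to exhibit a sequence of admissible functions saturating the bound. A natural candidate is $f_0(x)=\sqrt{-\eta_{\mu\nu}(x-p)^\mu(x-p)^\nu}$ on $J^+(p)$, extended by $0$ outside, which saturates $-\eta(df_0,df_0)=1$ in the interior of $J^+(p)$ and satisfies $f_0(q)-f_0(p)=d(p,q)$. Since this function is neither smooth nor bounded, one approximates it by a sequence $f_n\in C^\infty(\M,\setR)$ obtained by smoothing near $p$ and near the boundary of $J^+(p)$ and suitably cutting off at infinity, checking that each $f_n$ remains admissible (possibly after a small uniform rescaling that compensates for the smoothing loss) and that $f_n(q)-f_n(p)\to d(p,q)$. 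Combined with the general inequality, this forces $\widetilde d(p,q)=d(p,q)$. The main obstacle lies in the translation step of the third paragraph: because $\J$ and $i\gamma$ anticommute, the symbol $\J([D,f]+i\gamma)$ has a genuine mixed Clifford structure whose spectrum on the spinor fiber must be controlled uniformly in $x$, and disentangling the causality content of $[D,f]$ from the length-normalization content of $i\gamma$ is precisely what allows $\widetilde d$ to capture the Lorentzian length scale rather than merely the causal order; propagating this operator-to-pointwise correspondence through the smoothing of $f_0$ so that the approximations remain admissible is where the real technical work lies.
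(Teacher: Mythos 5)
The paper itself does not reproduce a proof of this proposition; it simply cites \cite{F6} as the source and later remarks in the text that the general equality of $\widetilde d$ with the Lorentzian distance remains an open problem. So there is no in-paper proof to compare against, and the evaluation below concerns correctness only.

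Your verification of the four axioms is essentially right: Properties 1--4 are indeed formal consequences of the infimum structure, and the non-emptiness argument via a rescaled $\T$ is sound (including the observation that $(i\J\gamma)^2=1$ because $\J\gamma=-\gamma\J$, so $i\J\gamma$ is Hermitian with spectrum in $\{\pm1\}$ and hence dominated by $cN^{-1}\geq\sup N\cdot N^{-1}\geq 1$). The structure of the $d\leq\widetilde d$ argument is also right: the operator condition $\scal{\phi,\J([D,f]+i\gamma)\phi}\leq 0$ does localise to the pointwise conditions that $f$ is causal and $g(df,df)\leq -1$ (the eikonal or ``steepness'' condition), and then the reverse Cauchy--Schwarz inequality along causal curves gives $f(q)-f(p)\geq L(\sigma)$, whence $\widetilde d\geq d$. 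You explicitly flag the operator-to-pointwise translation as deferred; that step is indeed the crux and is proved in the reference via arguments parallel to Theorem~\ref{thmopcausal}, but your sketch of its content is accurate.

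The genuine gap is in the Minkowski equality. Your candidate $f_0(x)=\sqrt{-\eta_{\mu\nu}(x-p)^\mu(x-p)^\nu}$ extended by $0$ outside $J^+(p)$ is not admissible: outside $J^+(p)$ one has $df_0=0$, so $g(df_0,df_0)=0$ and the steepness condition $g(df_0,df_0)\leq-1$ fails on an open set. This cannot be cured by smoothing or by a ``small uniform rescaling'' --- a multiplicative rescaling large enough to restore steepness where the gradient nearly vanishes would inflate $f(q)-f(p)$ well above $d(p,q)$, so the sequence would not saturate the infimum. The correct and much simpler witness is a globally steep linear time function: with $v=q-p$ future-directed timelike and $u=v/\sqrt{-\eta(v,v)}$, set $f(x)=-\eta_{\mu\nu}u^\mu x^\nu$. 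Then $\eta(df,df)=-1$ everywhere and $f$ is causal, hence admissible, and $f(q)-f(p)=\sqrt{-\eta(v,v)}=d(p,q)$, so the infimum is attained exactly (no limiting sequence needed). For spacelike or past-related pairs one picks a linear time function annihilating $q-p$ (respectively uses Property~3) to get $\widetilde d(p,q)=0=d(p,q)$. Replacing your $f_0$ argument with this linear construction closes the gap.
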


\vspace{0.5cm}

As we see, formula \eqref{LorDist} yields a valid Lorentzian distance corresponding to the usual one on Minkowski space-time (and in fact on every space-time whose Lorentzian distance function can be approximated by smooth functions). Despite the fact that no counterexamples are known for more general space-times, the complete proof of the equality of \eqref{LorDist} with the usual Lorentzian distance is still an open problem. Also, a formulation in odd number of space-time dimensions has not yet been established.

The generalisation of \eqref{LorDist} to noncommutative space-times is not straightforward as the formula involves unbounded functions and might require a supplementary structure in addition to those included in the definition of a Lorentzian spectral triple. A proposal for such a structure suitable for the construction of a noncommutative Lorentzian distance formula is presented in \cite{F5}.

\vspace{0.2cm}

\section{An almost commutative toy model}\label{secalmcom}

As an illustration of the previous section we shall present a toy model of an almost commutative manifold on which the notion of causality can easily be explored. The technical details can be found in \cite{F7}. This model is constructed as a prelude to the study of the noncommutative standard model described in Section \ref{secintrononcommu}.

The toy model is constructed as a Kaluza-Klein product between a two-dimensional Minkowski space-time and a finite noncommutative space based on the complex algebra $M_2(\setC)$. The continuous Minkowski space-time will be referred to as the \emph{space-time} and the finite space as the \emph{internal space}.

\newpage The construction of space-time as a Lorentzian spectral triple is obtained as follows:
\begin{itemize}\itemsep=0pt
\item $\H_\mathcal{M} = L^2({\mathbb R}^{1,1}) \otimes {\mathbb C}^{2}$ is the Hilbert space
of square integrable sections of the spinor bundle over the two-dimensional Minkowski space-time.
\item $\mathcal{A}_\mathcal{M} = \mathcal{S}({\mathbb R}^{1,1})$ is the algebra of Schwartz functions
(rapidly decreasing at inf\/inity together with all derivatives) with pointwise multiplication.
\item $\widetilde{\mathcal{A}}_\mathcal{M} = \Span_{{\mathbb C}}(\mathcal{C}_\mathcal{M}) \subset
\mathcal{B}({\mathbb R}^{1,1})$ is the algebra of smooth bounded functions with all derivatives bounded, the limits of which exist along every causal curve. $\mathcal{C}_\mathcal{M}$ represents the set of smooth bounded causal functions on ${\mathbb R}^{1,1}$ with all derivatives bounded.
\item $D_\mathcal{M} = -i \gamma^\mu\partial_\mu$ is the f\/lat Dirac operator.
\item $\mathcal{J}_\mathcal{M}=-[D,x^0] = ic(dx^0) = i\gamma^0$, where $x^0$ is the global time coordinate.
\item $\gamma_\mathcal{M} = \gamma^0 \gamma^1.$
\end{itemize}
The internal space is a finite Riemannian spectral triple:
\begin{itemize}\itemsep=0pt
\item $\H_F = {\mathbb C}^2$; 
\item $\mathcal{A}_F = M_2({\mathbb C})$ is the algebra
of $2\times2$ complex matrices with the natural multiplication and representation on $\H_F$;
\item $D_F =
\diag(d_1,d_2)$, with $d_1,d_2 \in {\mathbb R}$, $d_1 \neq d_2$.
\end{itemize}
The Dirac operator in the internal space can be chosen to be diagonal, since any other self-adjoint operator can be obtained by a unitary transformation. The latter can be understood as a global rotation of the space of states and thus does not affect the causal structure.

The almost commutative spectral triple is constructed using the product \eqref{eqnacproduct} with the fundamental symmetry given by $\J=\J_\M \otimes 1$. The unitisation is chosen as the algebra of $2 \times 2$ complex matrices on $\M$ whose diagonal entries are in $\widetilde{\mathcal{A}}_\mathcal{M}$ and the off-diagonal ones -- in $\mathcal{A}_\mathcal{M}$, i.e.
$$\widetilde{\mathcal{A}}\cong\left(\begin{matrix}\widetilde{\mathcal{A}}_\mathcal{M} &  \mathcal{A}_\mathcal{M}\\
\mathcal{A}_\mathcal{M} &  \widetilde{\mathcal{A}}_\mathcal{M}\end{matrix}\right)=\left(
\begin{matrix}\Span_{{\mathbb C}}(\mathcal{C}_\mathcal{M}) &  \mathcal{S}\big({\mathbb R}^{1,1}\big)\\
\mathcal{S}\big({\mathbb R}^{1,1}\big) &  \Span_{{\mathbb C}}(\mathcal{C}_\mathcal{M})\end{matrix}\right) \cdot$$
This specific choice is dictated by the condition \eqref{eqncondcausal} along with the requirements of Definition \ref{deflost}. As in the commutative case, the space of `physical' pure states $\mathcal{M}(\widetilde{\mathcal{A}}) =\big\{\omega\in P(\widetilde{\mathcal{A}}) : \mathcal{A}\not\subset\ker\omega\big\} \cong \M \times P(\A_F)$ is independent of the unitisation.

At first we need to describe the space $\mathcal{M}$. Since the algebra $\mathcal{A}_\mathcal{M}$ is commutative, all of the pure states are \emph{separable} and we have $\mathcal{M}(\widetilde{\mathcal{A}}) \cong \M \times P(\A_F)$. Moreover, it is known \cite{IochKM} that $P(\M_2(\setC)) \cong {\mathbb C} P^1 \cong S^2$, so the space of pure states (with the \mbox{weak-$^*$} topology) is homeomorphic to the Cartesian product $\M \times S^2$. Let us stress however that the $S^2$ component is a only topological sphere and its metric aspect is governed by the noncommutative geometry described by the spectral triple $(\mathcal{A}_F,\H_F,D_F)$.

Every pure state $\omega_{p,\xi} \in \mathcal{M}(\widetilde{\mathcal{A}})$ is determined by two entries:
\begin{itemize}\itemsep=0pt
\item a~point $p \in {\mathbb R}^{1,1}$,
\item a~normalised complex vector $\xi\in{\mathbb C}^2$ defined up to a phase.
\end{itemize}
The value on some element $\mathbf{a}\in\widetilde{\mathcal{A}}$ is given by $\omega_{p,\xi}(\mathbf{a})=\xi^*\mathbf{a}(p)\xi$, where $\mathbf{a}(p)$ is the application of the evaluation map at $p$ on each entry in $\mathbf{a}$.

A vector $\xi=(\xi_1,\xi_2) \in {\mathbb C} P^1 \subset {\mathbb C}^2$ is identified with the point $(x_\xi,y_\xi,z_\xi) \in S^2$ via the relations $x_\xi=2\Re\{\xi_1^*\xi_2\}$, $y_\xi=2\Im\{\xi_1^*\xi_2\}$ and $z_\xi=\abs{\xi_1}^2-\abs{\xi_2}^2$. The quantity $z_\xi$ is named the latitude of the vector $\xi$ on $S^2$ and plays a particular role in the description of the causal structure. We note that if two pure states $\xi$ and $\varphi$ have the same latitude, they can always be written as $\xi_2 = \abs{\xi_2}
e^{i\theta_\xi}$ and $\varphi_2 = \abs{\varphi_2} e^{i\theta_\varphi}$ with $\abs{\xi_2}=\abs{\varphi_2}$, so $\theta_\xi$ and $\theta_\varphi$ represent the position of the vectors on the parallel of latitude.

By applying Definition \ref{defncauscone} to the almost commutative space-time described above we obtain the following result:

\begin{theorem}[\cite{F7}]
\label{thmcausal}
Two pure states $\omega_{p,\xi}, \omega_{q,\varphi} \in \mathcal{M}(\widetilde{\mathcal{A}})$ are causally
related with $\omega_{p,\xi} \preceq \omega_{q,\varphi}$ if and only if the following conditions are
respected: 
\begin{itemize}\itemsep=0pt
\item $p \preceq q$ in ${\mathbb R}^{1,1}$; \item $\xi$ and $\varphi$ have the same latitude;
\item $l(\gamma) \geq \frac{\abs{\theta_\varphi-\theta_\xi}}{\abs{d_1-d_2}}$, where $l(\gamma)$ represents
the length of a~causal curve $\gamma$ going from $p$ to $q$ on~${\mathbb R}^{1,1}$.
\end{itemize}
\end{theorem}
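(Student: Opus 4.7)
My plan is to reduce the operator inequality \eqref{eqopcauscond} on the product algebra to a pointwise condition on a $4\times 4$ Hermitian matrix, then to extract the three listed constraints by testing with carefully chosen causal elements. Writing a generic Hermitian $\mathbf{a}\in\widetilde{\mathcal{A}}$ as
$$\mathbf{a}\eq\begin{pmatrix} u & h \\ h^* & v \end{pmatrix},\qquad u,v\in\widetilde{\mathcal{A}}_\mathcal{M}\ \text{real},\quad h\in\mathcal{A}_\mathcal{M},$$
and using $D=D_\mathcal{M}\otimes 1+\gamma_\mathcal{M}\otimes D_F$ with $\mathcal{J}=i\gamma^0\otimes 1$, a direct computation shows that $\mathcal{J}[D,\mathbf{a}]$ is the multiplication operator given pointwise by a Hermitian matrix $Q(x)$ on the spinor$\otimes$internal fibre $\mathbb{C}^2\otimes\mathbb{C}^2$. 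After diagonalising $\gamma_\mathcal{M}$ on the spinor factor, the entries of $Q(x)$ involve the null derivatives $\partial_0 u\pm\partial_1 u$ and $\partial_0 v\pm\partial_1 v$ on the diagonal blocks and a mixture of $\partial_0 h\pm\partial_1 h$ together with an internal term $(d_1-d_2)h$ (coming from $\gamma^0\gamma_\mathcal{M}\propto\gamma^1$) on the off-diagonal blocks. The causal cone $\mathcal{C}$ then consists of those $\mathbf{a}$ for which $Q(x)\leq 0$ for every $x\in\mathbb{R}^{1,1}$.

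For \emph{necessity}, I would test $\omega_{p,\xi}\preceq\omega_{q,\varphi}$ against three nested families. First, $\mathbf{a}=u\cdot I$ with $u$ a scalar causal function yields $\omega_{p,\xi}(\mathbf{a})=u(p)$, and Theorem~\ref{thmopcausal} applied to the base immediately gives $p\preceq q$. Second, diagonal $\mathbf{a}=\diag(u,v)$ decouples the matrix $Q$, so the cone condition reduces to $u$ and $v$ being independently causal; letting them range over constants of either sign in the inequality $|\xi_1|^2u(p)+|\xi_2|^2v(p)\leq|\varphi_1|^2u(q)+|\varphi_2|^2v(q)$ forces $|\xi_i|^2=|\varphi_i|^2$, i.e.\ the equality of latitudes. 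Third, I would extract the speed-of-light bound from the $2\times 2$ principal minors of $Q$ pairing opposite chiralities of the spinor with different internal labels, which give the pointwise inequalities
$$(\partial_0u-\partial_1u)(\partial_0v+\partial_1v)\geq|d_1-d_2|^2|h|^2\quad\text{and}\quad(\partial_0u+\partial_1u)(\partial_0v-\partial_1v)\geq|d_1-d_2|^2|h|^2.$$
Constructing a suitable $\mathbf{a}\in\mathcal{C}$ with $u=v$ growing tangentially along a chosen causal curve $\gamma$ from $p$ to $q$ and with $h$ peaked on $\gamma$ carrying an $x$-dependent phase interpolating between $\theta_\xi$ and $\theta_\varphi$, I would show that a violation of $l(\gamma)\geq|\theta_\varphi-\theta_\xi|/|d_1-d_2|$ would force $\omega_{q,\varphi}(\mathbf{a})-\omega_{p,\xi}(\mathbf{a})<0$.

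For \emph{sufficiency}, assuming the three conditions, I would lift any causal curve $\gamma:[0,1]\to\mathbb{R}^{1,1}$ from $p$ to $q$ to a curve $t\mapsto(\gamma(t),\xi(t))\in\mathcal{M}(\widetilde{\mathcal{A}})$, where $\xi(t)$ stays on the common parallel of $\xi$ and $\varphi$ and satisfies $|\dot\theta(t)|\leq|d_1-d_2|\,|\dot\gamma(t)|$, which is achievable precisely by condition~3. For any $\mathbf{a}\in\mathcal{C}$ the derivative $\frac{d}{dt}\omega_{\gamma(t),\xi(t)}(\mathbf{a})$, after grouping the spacetime and internal contributions, can be recast as $\scal{\psi(t),(-Q(\gamma(t)))\psi(t)}$ for a spinor$\otimes$internal vector $\psi(t)$ built from $\dot\gamma(t)$ and $\xi(t)$; non-negativity then follows from the pointwise inequality $Q\leq 0$, and integration over $[0,1]$ delivers $\omega_{p,\xi}(\mathbf{a})\leq\omega_{q,\varphi}(\mathbf{a})$.

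The main obstacle, I expect, is the explicit construction in the last step of the necessity argument: one must exhibit, for each causal curve $\gamma$ and pair $(\theta_\xi,\theta_\varphi)$, a concrete $\mathbf{a}\in\mathcal{C}$ whose Schwartz off-diagonal $h$ carries exactly the required phase advance and saturates the chirality-mixing minor of $Q$ along $\gamma$ without breaking pointwise negativity of $Q$ elsewhere or the decay of $h$. The same matrix-to-curve translation governs the sufficiency step, where the rigidity of latitude conservation emerges from the block structure of $Q$ once the spinor test vector is aligned with $\dot\gamma(t)$.
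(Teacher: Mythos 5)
Your overall strategy is sound and, as far as I can judge, closely parallels the argument in the cited reference \cite{F7}: reduce the operator inequality \eqref{eqopcauscond} to the pointwise condition $Q(x)\leq 0$ on the $4\times 4$ Hermitian matrix obtained by writing $\J[D,\mathbf{a}]$ as a multiplication operator, extract the necessary conditions by testing against successively richer families in $\C$, and prove sufficiency by lifting a causal curve $\gamma$ to a curve $(\gamma(t),\xi(t))$ in $\M\times S^2$ and integrating the quadratic form. The explicit matrix you describe is correct: choosing a spinor basis in which $\gamma_\M=\sigma^3$, one gets
$$Q\eq -1\otimes\partial_0\mathbf{a}+\sigma^3\otimes\partial_1\mathbf{a}-i\sigma^1\otimes[D_F,\mathbf{a}],$$
and the two $2\times2$ principal minors you single out (mixing opposite chiralities with opposite internal labels) are precisely $(\partial_0 u\mp\partial_1 u)(\partial_0 v\pm\partial_1 v)\geq|d_1-d_2|^2|h|^2$. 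Your first two necessity steps (scalar elements $\Rightarrow p\preceq q$; diagonal constants of either sign $\Rightarrow$ equality of latitudes) are complete. I also checked that the sufficiency identity you invoke really does hold: taking $\psi_\pm=c_\pm\,\xi(t)$ with $|c_\pm|^2=\tfrac12(\dot\gamma^0\mp\dot\gamma^1)$ and $\Re(\bar c_+c_-)=\dot\theta/(2(d_1-d_2))$, one finds $\frac{d}{dt}\omega_{\gamma(t),\xi(t)}(\mathbf{a})=\langle\psi(t),(-Q(\gamma(t)))\psi(t)\rangle\geq 0$, and the constraint $|\Re(\bar c_+c_-)|\leq|c_+||c_-|$ is exactly $|\dot\theta|\leq|d_1-d_2|\sqrt{(\dot\gamma^0)^2-(\dot\gamma^1)^2}$, which integrates to condition~3.

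The genuine gap — which you yourself flag — is the necessity of condition~3. You give the right minor inequalities and the right heuristic (put a running phase into $h$ along $\gamma$), but you do not actually exhibit an element $\mathbf{a}\in\C$ separating $\omega_{p,\xi}$ from $\omega_{q,\varphi}$ when $d(p,q)<|\theta_\varphi-\theta_\xi|/|d_1-d_2|$. Note in particular that saturating the $\{1,4\}$ and $\{2,3\}$ minors is not sufficient: one must verify full $4\times 4$ negativity of $Q$, which also constrains $\partial_\pm h$ through the diagonal-block minors $\partial_\mp u\,\partial_\mp v\geq|\partial_\mp h|^2$, so the phase of $h$ cannot vary arbitrarily fast either. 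The usual remedy is to work with $u=v$, write $h=e^{-i\chi}f$ with $f\geq 0$ and $\chi$ real, and choose $u$, $f$, $\chi$ so that the resulting inequalities hold globally while $\hat{\mathbf{a}}(p,\xi)-\hat{\mathbf{a}}(q,\varphi)$ stays strictly positive; one must also take care that $h$ is Schwartz while $u$ is merely bounded causal. This explicit construction is the technical core of the cited proof and is what your proposal leaves open.
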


This theorem has various interesting implications concerning the causal relations in the almost commutative space-time:
\begin{itemize}
\item First of all, the usual causal relations in the space-time are preserved. This implies that the coupling with the internal space does not involve any causality violation.

\item Secondly, the motion within the internal space is only possible on a parallel of latitude, i.e.~no information can be transmitted from one parallel of latitude to another. As a consequence, no movement in the internal space is possible at the poles (note that the position of the poles and the parallels of latitude are actually determined by a unitary transformation diagonalising the finite Dirac operator). Even if such a restriction seems rather surprising, it is in fact completely coherent with a result in \cite{IochKM} where it is shown, using the Riemannian distance formula \eqref{eqnconnesdist}, that different parallels of latitude of $P(\A_F)$ are separated by an infinite distance.

\item The last condition of Theorem \ref{thmcausal} implies the existence of a strong relation between the space-time and the internal space. Indeed, the possible motion along a parallel of latitude are correlated with the length $l(\gamma) = \int \sqrt{-g_\gamma(\dot \gamma(t),\dot \gamma(t))} \,{\rm  d}t $, which physically represents the proper time of a signal moving along the curve $\gamma$. If $l(\gamma) \neq 0$, the constraint can also be understood as:
$$\frac{\abs{\theta_\varphi-\theta_\xi}}{l(\gamma)} = \frac{\text{dist}_{\text{parallel}}(\xi,\varphi)}{\text{proper time}(\gamma)} \leq \text{constant} \abs{d_1-d_2}.$$
This means that there exists an upper bound on the speed within the internal space which only depends on the values of the finite Dirac operator $D_F$. Such a constraint implies that the ratio between the distance travelled along the parallel of latitude and the time spent on the space-time cannot go beyond a fixed limit.
\end{itemize}

\begin{figure}[ht] \centering
\includegraphics[width=8cm]{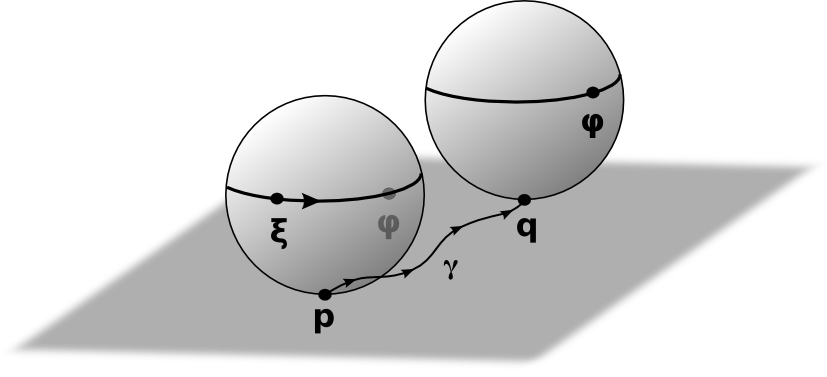}
\caption{Representation of a~path in the space of pure states.}
\label{figcausalrel}
\end{figure}

Let us take a closer look at the geometry of the almost commutative space-time considered above. We stressed at the beginning of this section that the correspondence $P(\M_2(\setC)) \cong S^2$ is purely topological and does not take into account the metric aspect. In fact, as Connes' formula \eqref{defnewdist} may take an infinite value (it is a pseudo-distance rather than a distance), it may affect also the topology. Therefore, we should rather write $P(\M_2(\setC)) \cong \bigsqcup_{i \in [0,1]} S^1$.

Let us now fix a parallel of latitude on $S^2$ and consider the following product manifold ${\mathbb R}^{1,1} \times S^1$ endowed with the metric $ds^2 = -dt^2 + dx^2 + \frac{1}{\abs{d_1-d_2}} d\theta^2$. If we consider a curve $\tilde \gamma = \left( \gamma, \gamma_{S^1} \right)$ where $\gamma$ is a curve on ${\mathbb R}^{1,1}$ going from $p$ to $q$ and $\gamma_{S^1}$ is a curve on $S^1$ going from $\theta_\xi$ to $\theta_\varphi$, then $\tilde \gamma$ is causal if and only if $\gamma$ is causal on ${\mathbb R}^{1,1}$ and $ \frac{\abs{\theta_\varphi-\theta_\xi}}{\abs{d_1-d_2}} \geq l(\gamma)$, which gives the same condition as in Theorem \ref{thmcausal}.

The geometric picture that emerges is the following: For a finite spectral triple $(\A_F,\H_F,D_F)$, the space of pure states $P(\A_F)$ equipped with Connes' pseudo-distance can be written as a disjoint sum of metric spaces. Let us stress that the final object depends on the exact form of $D_F$. For instance, if one would take $D_F$ with $d_1 = d_2$ in the above model, one would discover that all of the points in $S^2$ would be separated by an infinite distance (see \cite{F7} for details). In the almost commutative product the resulting physical space of states is just a Cartesian product of a manifold and $P(\A_F)$ with a simple coupling of the metrics.

Our results show that if the manifold is chosen to be a globally hyperbolic one, this picture remains true. We have demonstrated it on a toy model, but we conjecture that the described situation is general.

\section{Conclusions and outlook}

In this paper we have presented mathematically rigorous results on the extension of the causal structure to noncommutative geometry. Surprisingly enough, even for a simple, almost commutative toy model, the causal relations turned out to be more robust then the classical ones. However, one might ask whether the obtained mathematical structure indeed describes an aspect of the Universe? One should be aware of conceptual problems lurking in every unification theory such as noncommutative geometry. From the mathematically sophisticated theories of General Relativity and Quantum Mechanics we have learned that one should be ready to abandon one's intuitions and prejudices about such notions as time or observable. Indeed, we have seen that when spaces become noncommutative, the notion of a point -- fundamental from the point of view of physical idealisations -- becomes much harder to grasp.

Therefore, one should ask: what does causality really mean when geometry becomes noncommutative? Our understanding is that it puts constraints on the evolution of a physical system described via its states. By extending the basic Einstein's postulate that no physical signal can exceed the speed of light, we claim that no physical process can lead to an evolution in the space of states that violates the conditions imposed by the generalised causal structure. This interpretation may lead to strong physical predictions as it indicates which processes are excluded and which are allowed. These on the other hand can be explicitly tested in a suitable real-world experiment.

As mentioned in the introduction, the almost commutative geometry is uti\-lised to model fundamental interactions at a classical (not quantum) level and this is the domain in which one should seek potential physical consequences of generalised causal relations. If one chooses the algebra $\widetilde{\mathcal{A}}_F = \mathbb{C} \oplus M_2(\mathbb{C}) \oplus M_3(\mathbb{C}) \oplus \mathbb{C}$ one obtains $\P(\widetilde{\A}_F) \cong \{p_1\} \cup \{p_2\} \cup S^2 \cup \mathbb{C}P^2$. Although the model at hand is classical it is legitimate to regard $\P(\widetilde{\A}_F)$ as the space  of the corresponding quantum system \cite{rovelli}.
In this spirit, $\P(\widetilde{\A}_F)$ can be given the following interpretation: 
\begin{itemize}
\item The points $p_1$ and $p_2$ correspond to electroweak singlets -- right-handed electron and right-handed neutrino respectively.
\item The $S^2$ component encompasses the electroweak doublet consisting of left-handed electron and a left-handed neutrino.
\item $\mathbb{C}P^2$ part encodes the strong quark triplet. Note that indeed the strong interactions do not distinguish between right- and left-handed particles.
\end{itemize}

Let us also mention that the second $\mathbb{C}$ summand in $\widetilde{\mathcal{A}}_F$ allows for massive neutrinos, which implies the existence of a right-handed neutrino \cite{Christoph2}. This is consistent with the fact the $P(\mathbb{C}) = \{p\}$ adds a new single point to $\P(\widetilde{\A}_F)$.

By sticking to the particle-physics interpretation of the considered toy model we could give a free rein to our imagination and try to venture some possible physical consequences.

The first feature that appears to be general is that if one moves in the space-time with the speed of light then no change at all is allowed in the internal space. This is because the proper time along the lightlike curve is constant and equal to 0 at each moment of the evolution. One might thus speculate that any particle which is able to interact with others via a non-gravitational force must have a non-zero mass, so that it travels at a speed smaller than the speed of light. In particular, it would imply that neutrinos in \emph{all} generations are massive, since the space of states description does not distinguish between the generations. Let us stress that current experiments on neutrino oscillations imply that at least two of the neutrinos should have non-zero masses. The same reasoning might be applied to some NCG-based dark matter models (see \cite{Christoph3} for instance).

Secondly, it is natural to suspect that whenever that Riemannian distance between two states of the system is infinite, no causal relation between them can take place. This might potentially have deep physical consequences concerning the evolution of one state into another. However, to understand this aspect one would need to introduce gauge bosons, which are described by the fluctuations of the Dirac operator \cite[Section 3.5]{Chamseddine:2006ep}. It might change the picture as for instance the introduction of the Higgs field renders finite the distance between the two copies of space-time in the two-sheeted model \cite[p. 43]{Dungen}.

To draw concrete physical conclusions, one should enrich several aspects of the described toy model:
\begin{enumerate}

\item The base manifold should be a 4-dimensional, possibly curved, globally hyperbolic space-time. In our opinion this is only a technical issue and should not induce essentially new physical effects.

\item The finite algebra should be extended to the full Standard Model algebra $\widetilde{\A}_F$. This step should be performed with care, as working directly with $\widetilde{\A}_F$ seems to be too involved at the present stage. In a forthcoming paper \cite{F8} we will describe the causal structure of a two-sheeted model based on a finite algebra $\A_F = \mathbb{C} \oplus \mathbb{C}$ with a non-diagonal Dirac operator $D_F$.

\item The impact of the fluctuations of the Dirac operator on the causal structure needs to be studied. As mentioned, this operation may change the causal relations between the states rather drastically. 

\end{enumerate}

We are of course aware that the above interpretations are highly speculative and a lot of hard mathematical work needs to be done before one would be entitled to draw any conclusions about the structure of the Universe. On the other hand, the almost commutative models seem to be interesting from the point of view of physics as they might lead to testable predictions.

We should also note, that there is another approach to causality in the noncommutative regime which leads to rather different, though no less intriguing, effects \cite{Besnard2,Besnard3}. 

Finally, let us make a disclaimer about our choice of considering states as a generalisation of space-time events. Our attitude is based on the conviction that the notion of a point in noncommutative geometry simply does not make sense. The fact that the use of states or maximal ideals or irreducible representation does not necessarily lead to the same structure just means that the noncommutative geometry is more robust than the commutative one and therefore it carries more information.

To conclude this essay, let us make an outlook towards other possible directions of development of the Lorentzian noncommutative geometry. First of all one should agree on the actual definition of a Lorentzian spectral triple. To this end, one would need to establish a reconstruction theorem allowing to identify the exact class of Lorentzian spectral triples corresponding to Lorentzian manifolds, or at least to globally hyperbolic space-times. An extension of Connes' reconstruction theorem to pseudo-Riemannian or Lorentzian manifolds, which is deep in the shadows at this moment, is a hope for the future.

The second significant drawback of Lorentzian noncommutative geometry is that only two out of three steps needed in order to set a physical model (the topology and the differential structure) are partially fulfilled. The question of the dynamics is still completely open. The usual tools to compute the spectral action in the Riemannian setting are not valid, because of the vicious properties of the Lorentzian Dirac operator. It is not even known whether the problem only requires new techniques or whether one should modify the very paradigm of the spectral action principle. This is probably the biggest challenge for the future of Lorentzian noncommutative geometry.

\section*{Acknowledgement}

This work was supported by a grant from the John Templeton Foundation. M.E. would like to thank Fabien Besnard and Christoph Stephan for the stimulating discussions during the FFP14 conference.


\end{document}